\RequirePackage{amsthm}
\documentclass[sn-mathphys-num,pdflatex]{sn-jnl}% Math and Physical Sciences Reference Style

\usepackage{graphicx}%
\usepackage{multirow}%
\usepackage{amsmath,amssymb,amsfonts}%
\usepackage{amsthm}%
\usepackage{mathrsfs}%
\usepackage[title]{appendix}%
\usepackage{xcolor}%
\usepackage{textcomp}%
\usepackage{manyfoot}%
\usepackage{booktabs}%
\usepackage{listings}%

\usepackage{tabularray}
\usepackage{overpic}
\usepackage[normalem]{ulem}

\theoremstyle{thmstyleone}%
\newtheorem{theorem}{Theorem}%  meant for continuous numbers
\newtheorem{corollary}[theorem]{Corollary}% 
\newtheorem{lemma}[theorem]{Lemma}%

\theoremstyle{thmstyletwo}%
\newtheorem{example}{Example}%
\newtheorem{remark}{Remark}%

\theoremstyle{thmstylethree}%
\newtheorem{definition}{Definition}%

\usepackage{mathtools}
\DeclarePairedDelimiter\abs{\lvert}{\rvert}

\DeclarePairedDelimiter\ceil{\lceil}{\rceil}
\DeclarePairedDelimiter\floor{\lfloor}{\rfloor}
\DeclarePairedDelimiter\parenv{\lparen}{\rparen}
\DeclarePairedDelimiter\sparenv{\lbrack}{\rbrack}

\DeclarePairedDelimiter\set{\{}{\}}

\renewcommand{\leq}{\leqslant}

\renewcommand{\geq}{\geqslant}

\newcommand{\cS}{\mathcal{S}}

\newcommand{\F}{\mathbb{F}}
\newcommand{\C}{\mathbb{C}}

\newcommand{\Z}{\mathbb{Z}}

\newcommand{\Q}{\mathbb{Q}}

\newcommand{\eqdef}{\triangleq}

\DeclareMathOperator{\supp}{supp}
\DeclareMathOperator{\wt}{wt}

\begin{document}

\title[On Nearly-Perfect Covering Codes Beyond Radius One]{On Nearly-Perfect Covering Codes Beyond Radius One}

%%=============================================================%%
%% Prefix	-> \pfx{Dr}
%% GivenName	-> \fnm{Joergen W.}
%% Particle	-> \spfx{van der} -> surname prefix
%% FamilyName	-> \sur{Ploeg}
%% Suffix	-> \sfx{IV}
%% NatureName	-> \tanm{Poet Laureate} -> Title after name
%% Degrees	-> \dgr{MSc, PhD}
%% \author*[1,2]{\pfx{Dr} \fnm{Joergen W.} \spfx{van der} \sur{Ploeg} \sfx{IV} \tanm{Poet Laureate} 
%%                 \dgr{MSc, PhD}}\email{iauthor@gmail.com}
%%=============================================================%%

\author*[1]{\fnm{Gabriel} \sur{Sac Himelfarb}}\email{sachimeg@mcmaster.ca}

\author[1,2]{\fnm{Moshe} \sur{Schwartz}}\email{schwartz.moshe@mcmaster.ca}

\affil[1]{\orgdiv{Department of Electrical and Computer Engineering}, \orgname{McMaster University}, \orgaddress{\city{Hamilton}, \postcode{L8S 4K1}, \state{ON}, \country{Canada}}}

\affil[2]{\orgdiv{School of Electrical and Computer Engineering}, \orgname{Ben-Gurion University of the Negev}, \orgaddress{\city{Beer Sheva}, \postcode{8410501}, \country{Israel}}}

%%==================================%%
%% sample for unstructured abstract %%
%%==================================%%

\abstract{
We study (binary) nearly-perfect covering codes, which are codes that attain the Van Wee bound with equality. They act as the covering counterparts to nearly-perfect error-correcting codes, which attain the Johnson bound with equality. These codes have been completely classified for covering radius $R=1$. We prove that no code with $R\geq 2$ can attain the original Van Wee bound with equality, since it omits the dependence on the minimum distance of the code. We refine the bound to account for the minimum distance and show some nearly-perfect covering codes. By proving some structural properties of such codes, we prove all nearly-perfect covering codes with $R=2,3$ must be equivalent to the codes we showed. We also prove that for any $R\geq 3$, there are at most a finite number of nearly-perfect covering codes.
}

\keywords{Covering codes, Van Wee bound}

%%\pacs[JEL Classification]{D8, H51}

\pacs[MSC Classification]{94B25,94B75,11H71,05B40,11D59}

\maketitle
\section{Introduction}

A (binary) code $C$ of length $n$ and size $M$ is a subset $C\subseteq \F_2^n$ of size $\abs{C}=M$. The elements of $C$ are called codewords. We equip $\F_2^n$ with the Hamming metric defined by the distance function between $x=(x_1,\dots,x_n),y=(y_1,\dots,y_n)\in\F_2^n$ as
\[
d(x,y) \eqdef \abs*{\set*{ 1\leq i\leq n : x_i\neq y_i}}.
\]
With this metric in place, two important parameters of $C$ are its minimum distance $d$, and covering radius $R$, defined as
\begin{align*}
d &\eqdef \min_{\substack{c,c'\in C \\ c\neq c'}} d(c,c'), \\
R &\eqdef \max_{v\in\F_2^n} \min_{c\in C} d(v,c).
\end{align*}
We summarize all the code parameters by saying $C$ is an $(n,M,d)R$ code.

A geometric interpretation of a code requires the definition of a ball of radius $r$, centered at $v\in\F_2^n$ as follows,
\[
B_r(v) \eqdef \set*{v'\in \F_2^n : d(v,v')\leq r}.
\]
The size of a ball does not depend on its center, $v$, and equals
\[
\abs*{B_r(v)} = \sum_{i=0}^r \binom{n}{i}.
\]
The packing radius of $C$ is the largest integer $e$ such that balls of radius $e$ centered at the codewords of $C$ are pairwise-disjoint. That is, for all $c,c'\in C$, $c\neq c'$, we have $B_e(c)\cap B_e(c') = \emptyset$. It is readily seen that $e=\floor{(d-1)/2}$, and we have the ball-packing bound
\begin{equation}
\label{eq:bpb}
M\sum_{i=0}^{e} \binom{n}{i} \leq 2^n. 
\end{equation}
Similarly, the covering radius of $C$ is the smallest integer $R$ such that balls of radius $R$ centered at the codewords of $C$ cover the entire space, i.e., $\bigcup_{c\in C} B_R(c) = \F_2^n$. This matches the definition of $R$ given above, and we have the ball-covering bound
\begin{equation}
\label{eq:bcb}
M\sum_{i=0}^{R} \binom{n}{i} \geq 2^n. 
\end{equation}

In the extreme case where the packing and covering radii of $C$ are the same, then the balls of radius $R=\floor{(d-1)/2}$ centered at the codewords of $C$ tile $\F_2^n$, and the code is said to be \emph{perfect}. For such codes, both~\eqref{eq:bpb} and~\eqref{eq:bcb} are attained with equality. Apart from the trivial perfect codes ($C=\F_2^n$, or any $C$ with $\abs{C}=1$), 
other perfect codes have been known for quite some time in the following parameters: $(2m+1,2,2m+1)m$ (e.g., the binary repetition code of odd length), $(2^m-1,2^{2^m-1-m},3)1$ (e.g., the binary Hamming code), and $(23,4096,7)3$ (e.g., the binary Golay code). We note that for each set of parameters there may exist many other equivalent and non-equivalent codes with the same parameters. It was the work of Tiet{\"a}v{\"a}inen and Perko~\cite{TiePer71} that completed the classification by showing that no other possible parameters exist for perfect codes.

A natural question to ask is: If a code is not perfect, can we improve on the ball-packing bound of~\eqref{eq:bpb}? The answer is yes. The Johnson bound (see~\cite{Joh62} and~\cite[Theorem 2.3.8]{PleHuf03}) states that for any $(n,M,2e+1)R$ code,
\begin{equation}
\label{eq:jbodd}
M\parenv*{\sum_{i=0}^e \binom{n}{i}+\frac{\binom{n}{e}}{\ceil{\frac{n-e}{e+1}}}\parenv*{\frac{n+1}{e+1}-\floor*{\frac{n+1}{e+1}}}}\leq 2^n,
\end{equation}
and for any $(n,M,2e+2)R$ code,
\begin{equation}
\label{eq:jbeven}
M\parenv*{\sum_{i=0}^e \binom{n}{i}+\frac{\binom{n}{e+1}}{\ceil{\frac{n-e}{e+1}}}}\leq 2^n.
\end{equation}

A non-perfect code that meets~\eqref{eq:jbodd}-\eqref{eq:jbeven} with equality is called a nearly-perfect error-correcting code. Once again, nearly-perfect error-correcting codes have been known in the following parameters: $(2^m-2, 2^{2^m-2-m},3)2$ (e.g., shortened binary Hamming codes), and $(2^m-1,2^{2^m-2m},5)3$ (e.g., punctured Preparata codes). Lindstr{\"o}m~\cite{Lin75} completed the classification of nearly-perfect error-correcting codes by showing no other parameters are possible (see also~\cite[Section 2.3.5]{PleHuf03} and the references therein).

In an analogous fashion, when there is no perfect code, the ball-covering bound of~\eqref{eq:bcb} can be improved. This was first proved by Van Wee in~\cite{Van88}, and later the proof was simplified by Struik in~\cite{Str94}. The improved bound states that for every $(n,M,d)R$ code,
\begin{equation}
\label{eq:vwb}
M \parenv*{\sum_{i=0}^R\binom{n}{i}-\frac{\binom{n}{R}}{\ceil{\frac{n-R}{R+1}}}\parenv*{\ceil*{\frac{n+1}{R+1}}-\frac{n+1}{R+1}}}\geq 2^n.
\end{equation}
The striking similarity between the Johnson bound~\eqref{eq:jbodd} and the Van Wee bound~\eqref{eq:vwb}, begs the question of finding the possible parameters for nearly-perfect covering codes (non-perfect codes that attain~\eqref{eq:vwb} with equality). Surprisingly, this has not been attempted until recently. Boruchovsky \emph{et al.}~\cite{BorEtzRot25} started studying this question by completely classifying all nearly-perfect covering codes with $R=1$. Among other things, it is proved in~\cite{BorEtzRot25} that the parameters of nearly-perfect covering codes with $R=1$ must either be $(2^m, 2^{2^m-m},2)1$ or $(2^m,2^{2^m-m},1)1$, and nearly-perfect covering codes exist in all of them.

We list our main contributions: We first examine the Van Wee bound, and show that as it was stated, no nearly-perfect covering code with covering radius $R\geq 2$ can exist. We note that this is due to the fact that it omits the dependence on the minimum distance of the code. After refining the bound to account for the minimum distance we can proceed to define nearly-perfect codes with covering radii $R\geq 2$. We show four codes that are nearly-perfect covering, two of which are parametric. We prove some structural properties of nearly-perfect covering codes, showing they must be quasi-perfect, as well as study the way over-covered vectors are located with respect to codewords. With this knowledge, we employ number-theoretic results to classify all possible parameters for the case of $R=2$, as well as show that all nearly-perfect codes of that covering radius must be equivalent to the codes we showed before. We extend the analysis and classify all nearly-perfect covering codes for $R=3$, and show how this approach generalizes to any higher radius. Additionally, we prove that for any covering radius $R\geq 3$ there are at most a finite number of nearly-perfect covering codes.

This paper is organized as follows. In Section~\ref{sec:vanwee} we show the relevant parts of the proof of the original Van Wee bound, and refine it to account for the minimal distance of the code, as well as show some nearly-perfect covering codes. In Section~\ref{sec:properties} we study properties of nearly-perfect covering codes. We proceed in Section~\ref{sec:R2} to completely classify all nearly-perfect covering codes with covering radius $R=2$, and then extend the analysis to higher radii in Section~\ref{sec:R3ormore}. In Section~\ref{sec:groupalgebra} we generalize regularity results from~\cite{BorEtzRot25}. We conclude in Section~\ref{sec:conc} with a short discussion of the results and open questions.

\section{Refining the Van Wee Bound}
\label{sec:vanwee}

We describe the original Van Wee bound~\cite{Van88} on the parameters of covering codes. We give some details on parts of its proof for two reasons. First, it allows us to point out a straightforward refinement which also takes into account the minimum distance of the code, which was neglected in the original theorem. Second, it allows us later to study the inner structure of codes attaining the bound with equality. In our description we also follow the notation and proof given by~\cite{Str94}.

We require some definitions. Throughout the paper, we let $C$ denote an $(n,M,d)R$ code. The support of a vector $x=(x_1,\dots,x_n)\in\F_2^n$ is $\supp(x)\eqdef \set{1\leq i\leq n : x_i\neq 0}$, and the weight $\wt(x)$ is the size of its support, or equivalently, $\wt(x)\eqdef \abs{\supp(x)} = d(x,0)$. The support of a set of codewords, $X\subseteq C$, is given by $\supp(X)\eqdef \bigcup_{x\in X}\supp(x)$. We denote by $e_i$ the standard $i$-th unit vector, i.e., the vector all of whose entries are $0$ except for the $i$-th one which contains $1$.

We use the notation $d(x,C)$ to denote the distance from a vector $x\in\F_2^n$ to the code $C$, i.e., $d(x,C)\eqdef\min_{c\in C}d(x,c)$. Thus, $R=\max_{x\in\F_2^n} d(x,C)$. We use $C_i$ to denote the set of vectors at distance exactly $i$ from $C$, namely, $C_i\eqdef \set{ x\in\F_2^n : d(x,C)=i}$. In particular, $C_R$ is the set of points farthest away from the code, which are also known as the \emph{deep holes} of $C$.

Given $x\in \F_2^n$, we say it is covered by a codeword $c\in C$ if $x\in B_R(c)$, or equivalently, $c\in B_R(x)$. We say $x$ is \emph{over-covered} by the code $C$ if $\abs{B_R(x)\cap C} \geq 2$, namely, $x$ is contained in at least two distinct balls of radius $R$ around the codewords of $C$. Let $f(x)$ denote the number of codewords at distance at most $R$ from $x$, and let $g(x)$ be the over-covering of $x$, i.e.,
\begin{align*}
f(x) &\eqdef \abs*{B_R(x) \cap C}, & g(x) &\eqdef f(x)-1.
\end{align*}
We extend the definition of $f$ and $g$ to sets of vectors, $X\subseteq \F_2^n$, as 
\begin{align*}
f(X)&\eqdef \sum_{x\in X}f(x), & g(X)&\eqdef \sum_{x\in X}g(x)=f(X)-\abs{X}.
\end{align*}
Let $\varepsilon$ be the average over-covering of all balls of radius $1$ centered at elements of $C_R$,
\[
\varepsilon \eqdef \frac{1}{\abs{C_R}}\sum_{x\in C_R}g(B_1(x)).
\]
Let $\mu$ be the maximum value of $\abs{B_1(x)\cap C_R}$ among all over-covered $x\in \F_2^n$, i.e. all $x$ such that $g(x)>0$, i.e.,
\begin{equation}
\label{eq:defmu}
\mu \eqdef \max_{\substack{x\in\F_2^n \\ g(x)>0}} \abs*{B_1(x)\cap C_R}.
\end{equation}

\begin{lemma}[{\cite[Eq.~(6)]{Str94}}]\label{lemma:epsilon}
    For any $x\in C_R$, 
    \begin{equation}\label{eq:ineqepsilon}
        g(B_1(x)) \geq (R+1)\cdot \parenv*{\ceil*{\frac{n+1}{R+1} }-\frac{n+1}{R+1}},
    \end{equation}
    with equality if and only if $\abs{B_{R+1}(x)\cap C}=\ceil{\frac{n+1}{R+1}}$. 
\end{lemma}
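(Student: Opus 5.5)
Fix $x\in C_R$ and write $K \eqdef \abs{B_{R+1}(x)\cap C}$. The plan is to express $g(B_1(x))$ exactly in terms of $K$ by a double-counting argument, and then to bound $K$ from below using the covering property of $C$.

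First, since $d(x,C)=R$, every codeword is at distance at least $R$ from $x$. So every codeword in $B_{R+1}(x)$ lies at distance exactly $R$ or exactly $R+1$ from $x$. Now count the pairs $(y,c)$ with $y\in B_1(x)$, $c\in C$ and $d(y,c)\le R$; this count equals $f(B_1(x))$.
\begin{itemize}
\item A codeword $c$ with $d(x,c)=R$ is within distance $R$ of $x$ itself and of the $R$ neighbours $x+e_i$ with $i\in\supp(x+c)$. It is also within distance $R$ of no other neighbour. So it contributes $R+1$ pairs.
\item A codeword $c$ with $d(x,c)=R+1$ is within distance $R$ of exactly the $R+1$ neighbours $x+e_i$ with $i\in\supp(x+c)$, and not of $x$. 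So it also contributes $R+1$ pairs.
\item A codeword at distance greater than $R+1$ contributes nothing.
\end{itemize}
Hence $f(B_1(x))=(R+1)K$, and therefore
\[
g(B_1(x)) = (R+1)K-(n+1).
\]

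Second, the covering property gives $f(y)\ge 1$ for every $y\in B_1(x)$. So $(R+1)K=f(B_1(x))\ge n+1$. Since $K$ is an integer, $K\ge\ceil{\frac{n+1}{R+1}}$. Substituting into the identity yields
\[
g(B_1(x))\ge (R+1)\ceil*{\frac{n+1}{R+1}}-(n+1) = (R+1)\parenv*{\ceil*{\frac{n+1}{R+1}}-\frac{n+1}{R+1}}.
\]
This is exactly \eqref{eq:ineqepsilon}.

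Finally, because the identity $g(B_1(x))=(R+1)K-(n+1)$ is exact, equality in \eqref{eq:ineqepsilon} holds if and only if $K=\ceil{\frac{n+1}{R+1}}$, which is the stated equality condition.

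The only point requiring care is the count of exactly $R+1$ neighbours in each of the two distance cases. This relies on $C$ having no codeword within distance $R-1$ of $x$, which is precisely what $x\in C_R$ guarantees.
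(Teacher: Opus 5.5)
Your proof is correct. The paper gives no proof of its own and defers to \cite{Str94}, and your argument is essentially that standard one. Each codeword at distance $R$ or $R+1$ from $x$ covers exactly $R+1$ points of $B_1(x)$, which gives the exact identity $g(B_1(x))=(R+1)\abs{B_{R+1}(x)\cap C}-(n+1)$; the covering requirement $f(y)\geq 1$ on $B_1(x)$ then yields both the bound and the equality characterization.
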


For the proof we refer the reader to \cite{Str94}, since we will not need the details for our subsequent study. The next lemma is also from~\cite{Str94}, and we provide its proof since its inner workings are used by us later.

\begin{lemma}[{\cite[Eq.~(8)]{Str94}}]\label{lemma:muineq}
If $z\in \F_2^n$ is such that $g(z)>0$, then 
\[
\abs*{B_1(z)\cap C_R}\leq \mu \leq n+1-R-\ceil{d/2}.
\]
\end{lemma}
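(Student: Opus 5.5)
The first inequality holds by the definition of $\mu$ in~\eqref{eq:defmu}, since $z$ is one of the over-covered vectors over which the maximum is taken. The real content is the upper bound on $\mu$. So we fix an arbitrary $z$ with $g(z)>0$ and show $\abs{B_1(z)\cap C_R}\leq n+1-R-\ceil{d/2}$.

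Since $g(z)>0$, there are two distinct codewords $c,c'\in C$ with $d(z,c)\leq R$ and $d(z,c')\leq R$. Translate so that $z=0$. Then $\wt(c),\wt(c')\leq R$ and $d(c,c')=\wt(c+c')\geq d$.

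Next we count the neighbors of $z$ that cannot be deep holes. A neighbor of $z$ is either $z$ itself or $z+e_i$ for some $1\leq i\leq n$.
\begin{itemize}
\item If $i\in\supp(c)$, then $d(z+e_i,c)=\wt(c)-1\leq R-1$, so $z+e_i\notin C_R$. The same holds for $i\in\supp(c')$.
\item For $z$ itself, $d(z,c)\leq R$. If $\wt(c)\leq R-1$ then $z\notin C_R$ directly. If instead $\wt(c)=R$, then $c$ has at least $R$ support positions, all excluded as above.
\end{itemize}

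It remains to bound the number of excluded positions. They include $\supp(c)\cup\supp(c')$, and
\[
\abs{\supp(c)\cup\supp(c')}=\frac{\wt(c)+\wt(c')+\wt(c+c')}{2}.
\]
Assume without loss of generality $\wt(c)\geq\wt(c')$. Then
\[
\abs{\supp(c)\cup\supp(c')} = \wt(c)+\abs{\supp(c')\setminus\supp(c)}.
\]
Also $\wt(c+c')=\abs{\supp(c)\setminus\supp(c')}+\abs{\supp(c')\setminus\supp(c)}\geq d$. Since $\wt(c)\geq\wt(c')$, we have $\abs{\supp(c)\setminus\supp(c')}\geq\abs{\supp(c')\setminus\supp(c)}$. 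The difference $\abs{\supp(c')\setminus\supp(c)}$ is only guaranteed to be at least $\ceil{d/2}$ when $\wt(c)=\wt(c')$. So a case split is needed, and the weights must be accounted for against $R$.

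\textbf{Case $\wt(c)=R$.} The excluded positions number at least $\wt(c)+(\wt(c+c')-(\wt(c)-\wt(c')))/2$. Using $\wt(c')\leq R$ does not directly give $\ceil{d/2}$, so instead we argue via the union formula: the union has size $(\wt(c)+\wt(c')+\wt(c+c'))/2$. We also need more exclusions when $\wt(c')<R$. In that case every $z+e_i$ with $i\notin\supp(c')$ has distance $\wt(c')+1\leq R$ to $c'$, so it could still be a deep hole, but $z$ itself is excluded. The count then becomes: excluded $\geq \abs{\supp(c)\cup\supp(c')}+[\text{$z$ excluded}]$. This must be compared with $R+\ceil{d/2}$, handling the subcases by whether each of $\wt(c),\wt(c')$ equals $R$.

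\textbf{Case both weights equal $R$.} Then the union is $R+\wt(c+c')/2\geq R+\ceil{d/2}$, since $\wt(c+c')$ is even. This gives the bound, using that $z\in C_R$ possibly adds back only the one extra point $z$ out of the $n+1$ neighbors.

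\textbf{Case $\wt(c')=R-1$ or smaller.} Here $z\notin C_R$, and I expect a refinement of the triangle inequality $d\leq\wt(c)+\wt(c')$ to recover the bound.

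This parity and weight bookkeeping is the main obstacle. If it fails, I would replace $c'$ by choosing the pair $c,c'$ maximizing the weights, or use $d\leq 2R$ style estimates to close the remaining subcases.
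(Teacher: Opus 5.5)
Your setup matches the paper's: translate $z$ to $0$, note that flipping any coordinate in $\supp(c)\cup\supp(c')$ cannot produce a deep hole, and use $\abs{\supp(c)\cup\supp(c')}=(\wt(c)+\wt(c')+\wt(c+c'))/2$. Your case $\wt(c)=\wt(c')=R$ is correct and is the paper's Case 1. The proposal stops there, however. The cases with $z\notin C_R$ are left as ``I expect a refinement of the triangle inequality to recover the bound,'' and the paragraph headed ``Case $\wt(c)=R$'' never reaches a conclusion. That is a genuine gap.

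Part of it is easy to close with what you already have. If $\wt(c),\wt(c')\in\set{R-1,R}$ with at least one equal to $R-1$, integrality together with $\wt(c+c')\geq d$ gives
$\abs{\supp(c)\cap\supp(c')}\leq\floor{(\wt(c)+\wt(c')-d)/2}$. Hence $\abs{\supp(c)\cup\supp(c')}\geq\ceil{(\wt(c)+\wt(c')+d)/2}\geq R-1+\ceil{d/2}$. Since $z$ itself is then not a deep hole, you get $\abs{B_1(z)\cap C_R}\leq n-(R-1+\ceil{d/2})$, which is the paper's Cases 2.a and 2.b.

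The remaining situation needs a different idea. When some codeword lies within distance $R-2$ of $z$, counting support positions can fail outright. For example, if $z$ is itself a codeword and $d(z,c)=R$ with $3\leq d\leq R$, the union has size only $R<R-1+\ceil{d/2}$. What you need is the observation that if $d(z,C)\leq R-2$, every vector of $B_1(z)$ is within distance $R-1$ of $C$, so $B_1(z)\cap C_R=\emptyset$. You must then also check that the right-hand side is nonnegative, i.e.\ $R\leq n-\ceil{d/2}$. This holds for any code with two codewords at distance $d$: from any vector, flip at most $\floor{d/2}$ of the $d$ positions where they differ and match the other $n-d$.

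Your fallback of choosing the pair of maximum weight does not handle this. The clean fix is to split on $d(z,C)$ rather than on the weights of an arbitrary covering pair, as the paper does:
$d(z,C)\leq R-2$ is the trivial case above; $d(z,C)=R$ is your completed case; and for $d(z,C)=R-1$, pick $c'$ at distance $R-1$ and apply the integrality argument above.
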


\begin{proof}
We first note that it suffices to prove that for all $z$ as above, $\abs{B_1(z)\cap C_R}\leq n+1-R-\ceil{d/2}$, since then~\eqref{eq:defmu} completes the claim. Define
\[
\widetilde{T}(z)\eqdef\set*{y\in\F_2^n: z-y\in C \text{ and } \wt(y)\leq R}.
\]
We know that $z$ is covered by (at least) two codewords $c,c'$ at distance at most $R$ from $z$. Then $y\eqdef z-c$ and $y'\eqdef z-c'$ both belong to $\widetilde{T}(z)$. We denote $A\eqdef \supp(y)$ and $A'\eqdef \supp(y')$.

Notice that if $d(z,C)<R-1$, then $\abs{B_1(z)\cap C_R}=0$, so the inequality holds\footnote{It is possible to see that $R\leq n-\ceil{d/2}$: Let $c,c'\in C$ be two codewords such that $d(c,c')=d$. Without loss of generality, assume $c$ and $c'$ differ in the first $d$ coordinates. Given any $x\in\F_2^n$, we can flip at most $\floor{d/2}$ bits so the first $d$ coordinates match $c$ or $c'$, and then flip at most $n-d$ bits to match the last $n-d$ coordinates to both $c$ and $c'$. Thus, $d(x,C)\leq \floor{d/2}+n-d = n-\ceil{d/2}$.}, and in the remainder of the proof we may suppose that $d(z,C)$ is either $R-1$ or $R$. We study the two cases separately. 

\textbf{Case 1:} $d(z,C)=R$, so $d(z,c)=d(z,c')=R$. Let us see that $\abs{\supp (\widetilde{T}(z))}\geq R+\lceil d/2\rceil$. Since both $y$ and $y'$ belong to $\widetilde{T}(z)$, we have
\begin{align*}
\abs*{\supp (\widetilde{T}(z))} &\geq \abs*{A \cup A'}=\abs*{A}+\abs*{A'}-\abs*{A\cap A'} =2R-\abs*{A \cap A'}.
\end{align*}
Notice that
\begin{align*}
d\leq d(c,c')=d(y,y')=\abs*{A}+\abs*{A'}-2\abs*{A\cap A'}
=2R-2\abs*{A\cap A'}.
\end{align*}
Thus,
\[\abs*{A\cap A'}\leq R - \ceil{d/2},\]
and then
\[\abs*{\supp (\widetilde{T}(z))}\geq 2R - (R-\ceil{d/2})= R+ \ceil{d/2}.\]

\textbf{Case 2:} $d(z,C)=R-1$. We can see in a very similar way that $\abs{\supp (\widetilde{T}(z))}\geq R-1+\ceil{d/2}$, but now we need to consider separately the cases when both $d(c,z)=d(c',z)=R-1$ (Case 2.a) and when $d(c,z)=R$, $d(c',z)=R-1$ (Case 2.b):

\textbf{Case 2.a:} If $d(c,z)=d(c',z)=R-1$, then 
\[
\abs*{\supp (\widetilde{T}(z))}\geq \abs*{A\cup A'}=R-1+R-1-\abs*{A\cap A'}=2R-2-\abs*{A\cap A'}.
\]
We observe that
\[
d\leq d(c,c')=\abs{A}+\abs{A'}-2\abs*{A\cap A'}=2R-2-2\abs*{A\cap A'},
\]
which implies $\abs{A\cap A'}\leq R-1-\ceil{d/2}$, so then
\[
\abs*{\supp (\widetilde{T}(z))}\geq R-1+\ceil{d/2}.
\]

\textbf{Case 2.b:} If $d(c,z)=R$ and $d(c',z)=R-1$, then 
\[
\abs*{\supp (\widetilde{T}(z))}\geq \abs*{A\cup A'}=R-1+R-\abs*{A\cap A'}=2R-1-\abs*{A\cap A'},
\]
and 
\[
d\leq d(c,c')=\abs{A}+\abs{A'}-2\abs*{A\cap A'}=2R-1-2\abs*{A\cap A'}.
\]
Thus we have $\abs{A\cap A'}\leq R-\ceil{(1+d)/2}$, and therefore
\[
\abs*{\supp (\widetilde{T}(z))}\geq R-1+\ceil{(d+1)/2} \geq R-1+\ceil{d/2},
\]
where we note that the last inequality is strict when $d$ is even.
    
To conclude the proof, we observe that for $i\in\supp (\widetilde{T}(z))$ we have $d(z+e_i,C)<R$, so $z+e_i\notin C_R$. Thus, if $d(z,C)=R$, then $z\in C_R$, and by Case 1,
\[
\abs*{B_1(z)\cap C_R}\leq 1+n-\abs*{\supp (\widetilde{T}(z))}\leq n+1-(R+\lceil d/2\rceil).
\]
On the other hand, if $d(z,C)=R-1$, then $z\notin C_R$, and by Case 2,
\[
\abs*{B_1(z)\cap C_R}\leq n-\abs*{\supp (\widetilde{T}(z))} \leq n-(R-1+\lceil d/2\rceil).
\]
In both cases we see that the desired inequality follows.
\end{proof}

With these two lemmas it is now possible to prove the Van Wee bound. We rephrase the proof to highlight the point of omitting the dependence on the minimum distance of the code in the original bound.

\begin{theorem}[\cite{Van88,Str94}]
\label{th:vanwee}
For any $(n,M,d)R$ code, \eqref{eq:vwb} holds.
\end{theorem}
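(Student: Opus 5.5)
We will prove the bound by double counting the total over-covering $g(\F_2^n)$. Summing $f$ over all vectors counts each codeword once for every vector in its ball of radius $R$, so
\[
g(\F_2^n)=f(\F_2^n)-2^n=M\sum_{i=0}^R\binom{n}{i}-2^n .
\]
The plan is to bound $g(\F_2^n)$ from below in terms of $\abs{C_R}$ using Lemma~\ref{lemma:epsilon}, and then to bound $\abs{C_R}$ from below in terms of $M$ using a separate count.

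\textbf{Step 1 (over-covering near deep holes).} Consider $S\eqdef\sum_{x\in C_R} g(B_1(x))$. By Lemma~\ref{lemma:epsilon},
\[
S\geq \abs{C_R}(R+1)\parenv*{\ceil*{\frac{n+1}{R+1}}-\frac{n+1}{R+1}}.
\]
We can also rewrite $S$ by exchanging the order of summation:
\[
S=\sum_{z\in\F_2^n} g(z)\,\abs{B_1(z)\cap C_R}.
\]
Only vectors $z$ with $g(z)>0$ contribute, and for each of them Lemma~\ref{lemma:muineq} gives $\abs{B_1(z)\cap C_R}\leq\mu$. Hence $S\leq \mu\, g(\F_2^n)$. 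Combining the two estimates,
\[
g(\F_2^n)\geq \frac{(R+1)\abs{C_R}}{\mu}\parenv*{\ceil*{\frac{n+1}{R+1}}-\frac{n+1}{R+1}}.
\]

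\textbf{Step 2 (lower bound on $\abs{C_R}$).} We next need $\abs{C_R}\geq \frac{\mu}{R+1}\binom{n}{R}M/\ceil{\frac{n-R}{R+1}}$, or a bound of similar strength. It will be enough to use $\mu\leq n-R$, which follows from Lemma~\ref{lemma:muineq} because $d\geq1$; this is exactly where the original bound discards the minimum distance. With that, it suffices to show
\[
(R+1)\abs{C_R}\geq (n-R)\binom{n}{R}M\Big/\ceil*{\tfrac{n-R}{R+1}}.
\]
To get this, take the balls of radius $R$ around codewords and look at their boundary (points at distance exactly $R$). A boundary point $x$ of $B_R(c)$ has $n-R$ neighbours $x+e_i$ at distance $R+1$ from $c$. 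Either such a neighbour lies in $C_R$, or it is within distance $R-1$ of some other codeword $c'$, in which case $d(x,c')\leq R$ and $x$ is over-covered.

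\textbf{Step 3 (closing the argument).} Following Struik, one can avoid Step 2 by pairing the counts differently. Every $x\in C_R$ has $B_1(x)$ covered, with $f(B_1(x))\geq \ceil{\frac{n+1}{R+1}}$, since each codeword within distance $R+1$ of $x$ covers at most $R+1$ points of $B_1(x)$. The plan is to combine
\[
g(\F_2^n)\geq \frac{\varepsilon\abs{C_R}}{\mu}
\]
with the count of the pairs $(c,x)$ where $x\in C_R$ and $d(c,x)=R$. Each codeword $c$ has $\binom{n}{R}$ points at distance $R$. Points at distance $R$ from $c$ that are not deep holes are over-covered, and this yields the inequality
\[
\abs{C_R}\,\ceil*{\tfrac{n+1}{R+1}}+\text{(over-covering)}\geq M\binom{n}{R}.
\]
Eliminating $\abs{C_R}$ between the two inequalities and substituting $\mu\leq n-R$ should give~\eqref{eq:vwb}, with the factor $\ceil{\frac{n-R}{R+1}}$ arising as $\ceil{\frac{n+1}{R+1}}-1$.

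\textbf{Main obstacle.} The delicate part is Step 2/3: finding the exact counting relation between $\abs{C_R}$, $M\binom{n}{R}$ and $g(\F_2^n)$ so that, after eliminating $\abs{C_R}$, the constants come out precisely as $\binom{n}{R}/\ceil{\frac{n-R}{R+1}}$. I would try first to weight each point at distance $R$ from a codeword by its number of neighbours in $C_R$, and to use $\mu\leq n-R$ only at the very end.
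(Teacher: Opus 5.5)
\textbf{There is a genuine gap: the second counting inequality, which you yourself flag as unresolved, is wrong in both forms you propose.} Your Step 1 is exactly the paper's chain \eqref{ineq2}--\eqref{ineq4} combined with Lemma~\ref{lemma:epsilon}. Weakening to $\mu\le n-R$ is also what the paper does in \eqref{ineq1}.

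\emph{The Step 2 target is false.} There you lower-bound $\abs{C_R}$ by a multiple of $M$ alone. Take the even-weight code of even length $n\ge 4$ with $R=1$, so that $\delta\eqdef\ceil{\frac{n+1}{R+1}}-\frac{n+1}{R+1}=\frac12>0$. Here $\abs{C_1}=2^{n-1}$, so $(R+1)\abs{C_R}=2^n$, whereas your right-hand side equals $(n-1)2^n$. Heavy over-covering makes deep holes scarce, so any usable lower bound on $\abs{C_R}$ must involve $g(\F_2^n)$ or $2^n$.

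\emph{The Step 3 inequality is true but too weak.} It carries the coefficient $a\eqdef\ceil{\frac{n+1}{R+1}}$ on $\abs{C_R}$. Eliminating $\abs{C_R}$ with Step 1 then only gives
\[
g(\F_2^n)\bigl(a(n-R)+(R+1)\delta\bigr)\ge (R+1)\delta M\binom{n}{R},
\]
which is strictly weaker than \eqref{eq:vwb} whenever $\delta>0$, since then $a\ge 2$. Separately, the fact you quote there should read $\abs{B_{R+1}(x)\cap C}\ge a$, not $f(B_1(x))\ge a$.

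\emph{The missing idea is the count \eqref{ineq6}.} Every vector outside $C_R$ lies within distance $R-1$ of some codeword, so $\abs{C_R}\ge 2^n-M\sum_{i=0}^{R-1}\binom{n}{i}$. Since $2^n-M\sum_{i=0}^{R-1}\binom{n}{i}=M\binom{n}{R}-g(\F_2^n)$, this is precisely your Step 3 relation with coefficient $1$:
\[
\abs{C_R}+g(\F_2^n)\ge M\binom{n}{R}.
\]
Your own pair count gives this if done carefully. A deep hole $x$ is at distance exactly $R$ from $f(x)=1+g(x)$ codewords. Any other vector is at distance exactly $R$ from at most $f(x)-1=g(x)$ codewords, because at least one covering codeword is at distance $\le R-1$.

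Now multiply by $(R+1)\delta$ and use $(R+1)\delta\abs{C_R}\le (n-R)\,g(\F_2^n)$ from Step 1. This yields
\[
g(\F_2^n)\bigl((n-R)+(R+1)\delta\bigr)\ge (R+1)\delta M\binom{n}{R}.
\]
Finally, $(n-R)+(R+1)\delta=(R+1)(a-1)=(R+1)\ceil{\frac{n-R}{R+1}}$, so this is exactly \eqref{eq:vwb}. It confirms your guess that the factor $\ceil{\frac{n-R}{R+1}}$ arises as $a-1$.
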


\begin{proof}
The proof relies on computing the sum $\sum_{x\in C_R}g(B_1(x))$  in two different ways. We have the following chain of equalities and inequalities:
\begin{align}
    (n-R)\parenv*{M\sum_{i=0}^R \binom{n}{i}-2^n} 
    & \geq (n+1-R-\ceil{d/2})\parenv*{M\sum_{i=0}^R \binom{n}{i}-2^n} \label{ineq1}\\
    &\geq \mu \cdot\parenv*{M\sum_{i=0}^R \binom{n}{i}-2^n}\label{ineq2}\\
    &=  \mu \cdot g(\F_2^n) \label{ineq3}\\
    & \geq \sum_{z\in \F_2^n}g(z)\cdot\abs*{B_1(z)\cap C_R} \label{ineq4}\\
    & = \sum_{x\in C_R} \sum_{\substack{z\in\F_2^n \\ d(z,x)\leq 1}}g(z)= \sum_{x\in C_R}g(B_1(x)) = \varepsilon \cdot\abs*{C_R} \nonumber\\
    & \geq \varepsilon \cdot \parenv*{2^n-M\sum_{i=0}^{R-1}\binom{n}{i}}\label{ineq6}\\
    &\geq (R+1)\parenv*{\ceil*{\frac{n+1}{R+1}}-\frac{n+1}{R+1}}\parenv*{2^n-M\sum_{i=0}^{R-1}\binom{n}{i}} \label{ineq7},
\end{align}
where~\eqref{ineq1} follows from $\ceil{d/2}\geq 1$, \eqref{ineq2} follows from Lemma~\ref{lemma:muineq}, \eqref{ineq3} holds because the total covering is equal to the sum of the sizes of all balls of radius $R$ centered at codewords, \eqref{ineq4} follows from the definition of $\mu$, \eqref{ineq6} holds because points at distance $R$ from $C$ cannot be inside any ball of radius $R-1$ centered at a codeword, and~\eqref{ineq7} follows from Lemma~\ref{lemma:epsilon}. By comparing the endpoints of the chain of inequalities, and after some algebraic manipulation, we arrive at the desired bound. 
\end{proof}

In~\cite{BorEtzRot25}, the authors analyze and construct codes which attain the Van Wee bound with $R=1$. The case of nearly-perfect covering codes with $R\geq 2$ has not been studied to the best of our knowledge. Our first result claims that no code with covering radius $R\geq 2$ attains the Van Wee bound with equality.

\begin{theorem}
Let $C$ be an non-perfect $(n,M,d)R$ code, with $R\geq 2$. Then~\eqref{eq:vwb} is not attained with equality.
\end{theorem}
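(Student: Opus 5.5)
Suppose $C$ is a non-perfect $(n,M,d)R$ code with $R\geq 2$ that attains~\eqref{eq:vwb} with equality. We then force every inequality in the chain~\eqref{ineq1}--\eqref{ineq7} in the proof of Theorem~\ref{th:vanwee} to be tight, and show that tightness of~\eqref{ineq1} together with~\eqref{ineq6} is incompatible with $R\geq 2$.

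The first task is to check that equality in~\eqref{eq:vwb} is the same as equality between the endpoints of the chain. The algebra is as follows. Put $S=M\sum_{i=0}^R\binom{n}{i}-2^n$ and $\delta=\ceil{\frac{n+1}{R+1}}-\frac{n+1}{R+1}$. The chain gives
\[
(n-R)S\geq (R+1)\delta\Bigl(2^n-M\sum_{i=0}^{R-1}\binom{n}{i}\Bigr)=(R+1)\delta\Bigl(M\binom{n}{R}-S\Bigr),
\]
so $S\bigl(n-R+(R+1)\delta\bigr)\geq (R+1)\delta M\binom{n}{R}$. Since $n-R+(R+1)\delta=(R+1)\ceil{\frac{n+1}{R+1}}-(R+1)=(R+1)\bigl(\ceil{\frac{n+1}{R+1}}-1\bigr)=(R+1)\ceil{\frac{n-R}{R+1}}$, this is exactly~\eqref{eq:vwb}, and every step is reversible. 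So equality in~\eqref{eq:vwb} forces equality in each of~\eqref{ineq1}--\eqref{ineq7}.

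Next we use that $C$ is non-perfect, so $S>0$: if $S=0$ the balls of radius $R$ would tile the space. Then equality in~\eqref{ineq1} gives $n-R=n+1-R-\ceil{d/2}$, i.e.\ $\ceil{d/2}=1$, so $d\leq 2$. Equality in~\eqref{ineq6} gives $\varepsilon\abs{C_R}=\varepsilon\bigl(2^n-M\sum_{i=0}^{R-1}\binom{n}{i}\bigr)$. We need $\varepsilon>0$ here, which holds because equality in~\eqref{ineq3}--\eqref{ineq4} yields $\varepsilon\abs{C_R}=\mu S$. We also need $\mu>0$, which follows from~\eqref{ineq2} with $\mu=n-R$ once $n>R$; the case $n\leq R$ is trivial, since then $C$ is a single point and hence perfect. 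Hence $\abs{C_R}=2^n-M\sum_{i=0}^{R-1}\binom{n}{i}$. This says that the balls of radius $R-1$ around codewords are pairwise disjoint and together with $C_R$ fill the space. In particular the packing radius is at least $R-1\geq 1$, so $d\geq 2(R-1)+1=2R-1\geq 3$, contradicting $d\leq 2$.

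The main obstacle is the bookkeeping in the first step: the endpoint equality must be exactly~\eqref{eq:vwb} with no hidden slack. The degenerate cases must also be excluded ($S=0$, $\mu=0$, $\varepsilon=0$, $n\leq R$), since each would void the implication from equality to the corresponding factor being tight. The hypothesis $R\geq2$ enters only through $d\geq 2R-1\geq3$; for $R=1$ the disjointness condition is vacuous, which is consistent with the known nearly-perfect codes with $d\leq 2$.
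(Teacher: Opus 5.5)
Your proof is correct, but it reaches the contradiction through a different tight inequality than the paper's proof.

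Both arguments begin the same way: tightness of \eqref{ineq1} gives $\ceil{d/2}=1$, so $d\leq 2$.

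The paper then uses tightness of \eqref{ineq2} and \eqref{ineq4}. Every over-covered $z$ must satisfy $\abs{B_1(z)\cap C_R}=\mu=n-R>0$, which forces $d(z,C)\in\set{R-1,R}$. But two codewords at distance $d\leq 2\leq R$ cover each other. They are therefore over-covered while lying at distance $0\notin\set{R-1,R}$ from $C$.

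You instead use tightness of \eqref{ineq6}, after checking $\varepsilon>0$, to get $\abs{C_R}=2^n-M\sum_{i=0}^{R-1}\binom{n}{i}$. This gives disjointness of the radius-$(R-1)$ balls, hence $d\geq 2R-1\geq 3$. That is exactly the argument the paper later uses in Lemma~\ref{lemma:quasiperfect} for the refined bound, so your route frames the obstruction as ``$d\leq 2$ versus quasi-perfectness.''

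Your version also spells out what the paper leaves implicit. Equality in \eqref{eq:vwb} is precisely equality of the chain's endpoints, via your identity $n-R+(R+1)\delta=(R+1)\ceil{\frac{n-R}{R+1}}$. And you exclude the degenerate cases $S=0$, $\mu=0$, $\varepsilon=0$, $n\leq R$, so the cancellations are legitimate.

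The paper's route is slightly shorter. It needs only $S$ and $\mu$ to be nonzero and no information about $\varepsilon$, and it relies on the local position of over-covered vectors rather than a global counting identity.
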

\begin{proof}
Assume to the contrary a code attaining~\eqref{eq:vwb} exists. Then all the inequalities in the proof of Theorem~\ref{th:vanwee} are in fact equalities. Thus, from~\eqref{ineq1}, $\ceil{d/2}=1$, so $d\leq 2$. Additionally, from~\eqref{ineq2} and~\eqref{ineq4}, every $z\in\F_2^n$ which is over-covered (namely, $g(z)>0$) needs to satisfy
\[
\abs{B_1(z)\cap C_R}=\mu=n-R>0,
\]
where the last inequality holds because $M>1$ (since the code is not perfect). Then $d(z,C)\in \set{R-1,R}$ for any over-covered $z$. Take $c,c'\in C$ such that $d(c,c')=d\leq 2$. If we had $R\geq 2$, then $c$ and $c'$ would cover themselves and each other, and thus they would be over-covered. But $d(c,C)=0\notin\set{R-1,R}$, which is a contradiction if $R\geq 2$. 
\end{proof}

This analysis implies that the Van Wee bound is limited in some sense to treat nearly-perfect codes of radius $R=1$ only. If we are looking for a reason for this limitation, it becomes clear that~\eqref{ineq1} is the culprit, since it omits information given by the code's minimum distance $d$. We suggest the following straightforward refinement of the Van Wee bound.

\begin{theorem}
\label{th:refined}
For all $(n,M,d)R$ codes,
\begin{equation}\label{eq:newdef}
        M \parenv*{\sum_{i=0}^R \binom{n}{i}-\binom{n}{R}\frac{\ceil{\frac{n+1}{R+1}}-\frac{n+1}{R+1}}{\ceil{\frac{n+1}{R+1}}-\frac{R+\ceil{d/2}}{R+1}} }\geq 2^n.
\end{equation}
\end{theorem}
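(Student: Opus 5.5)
The plan is to rerun the double-counting chain from the proof of Theorem~\ref{th:vanwee}, but to start it at \eqref{ineq2} instead of \eqref{ineq1}. Write $S\eqdef M\sum_{i=0}^R\binom{n}{i}-2^n\geq 0$ for the total over-covering $g(\F_2^n)$. Combining Lemma~\ref{lemma:muineq} (which gives $\mu\leq n+1-R-\ceil{d/2}$), the definition of $\mu$, and the identity $\sum_{z}g(z)\abs{B_1(z)\cap C_R}=\sum_{x\in C_R}g(B_1(x))$ gives
\[
(n+1-R-\ceil{d/2})\, S \;\geq\; \sum_{x\in C_R} g(B_1(x)).
\]
Lemma~\ref{lemma:epsilon}, together with $\abs{C_R}\geq 2^n-M\sum_{i=0}^{R-1}\binom{n}{i}$, bounds the right-hand side from below by
\[
(R+1)\delta\parenv*{2^n-M\sum_{i=0}^{R-1}\binom{n}{i}},\qquad \delta\eqdef\ceil*{\frac{n+1}{R+1}}-\frac{n+1}{R+1}.
\]
The lower bound on $\abs{C_R}$ is used exactly as in \eqref{ineq6}, which needs $\varepsilon\geq 0$; this holds because $g\geq 0$.

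Next comes the algebra. Divide by $R+1$ and set
\[
K\eqdef\frac{n+1-R-\ceil{d/2}}{R+1}=\ceil*{\frac{n+1}{R+1}}-\frac{R+\ceil{d/2}}{R+1}-\delta .
\]
The displayed inequality becomes
\[
K\parenv*{M\sum_{i=0}^R\binom{n}{i}-2^n}\geq \delta\parenv*{2^n-M\sum_{i=0}^{R}\binom{n}{i}+M\binom{n}{R}}.
\]
Move everything to one side:
\[
(K+\delta)\,M\sum_{i=0}^R\binom{n}{i}-\delta M\binom{n}{R}\geq (K+\delta)2^n.
\]
Dividing by $K+\delta=\ceil{\frac{n+1}{R+1}}-\frac{R+\ceil{d/2}}{R+1}$ yields exactly \eqref{eq:newdef}.

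The main obstacle is justifying this division, that is, showing $K+\delta>0$; equivalently, $n+1-R-\ceil{d/2}+(R+1)\delta>0$. The footnote in the proof of Lemma~\ref{lemma:muineq} gives $R\leq n-\ceil{d/2}$, so $K\geq 1/(R+1)>0$, and since $\delta\geq 0$ the denominator is positive. That footnote assumes $M\geq 2$. For $M=1$ the minimum distance is degenerate, since there are no pairs of codewords. In that case I would treat the code as trivially perfect-like: then $R=n$, the ball-covering bound holds with equality, and one should check that the subtracted term is nonnegative, or else adopt a convention for $d$. The generic case $M\geq 2$ is handled by the argument above.

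A secondary point to verify is that equality in \eqref{ineq6} is never needed for the bound, only the inequality direction. Since $\varepsilon\geq 0$, that direction is safe, so the chain goes through as stated.
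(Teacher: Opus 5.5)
Your proposal is correct and is essentially the paper's proof: drop \eqref{ineq1}, keep the double-counting chain from \eqref{ineq2} through \eqref{ineq7} to get $(n+1-R-\ceil{d/2})\parenv*{M\sum_{i=0}^R\binom{n}{i}-2^n}\geq (R+1)\parenv*{\ceil*{\frac{n+1}{R+1}}-\frac{n+1}{R+1}}\parenv*{2^n-M\sum_{i=0}^{R-1}\binom{n}{i}}$, and rearrange. Your explicit check that the denominator $\ceil{\frac{n+1}{R+1}}-\frac{R+\ceil{d/2}}{R+1}$ is positive, using $R\leq n-\ceil{d/2}$ when $M\geq 2$, supplies a detail the paper leaves implicit in ``after rearranging''.
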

\begin{proof}
The proof proceeds along the same sequence of inequalities as the proof of Theorem~\ref{th:vanwee}. However, we omit~\eqref{ineq1}, and compare the two sides,
\begin{multline*}
(n+1-R-\ceil{d/2})\parenv*{M\sum_{i=0}^R \binom{n}{i}-2^n} \\
\geq 
(R+1)\parenv*{\ceil*{\frac{n+1}{R+1}}-\frac{n+1}{R+1}}\parenv*{2^n-M\sum_{i=0}^{R-1}\binom{n}{i}}.
\end{multline*}
After rearranging, we obtain the desired inequality.
\end{proof}

By abuse of terminology, we redefine nearly-perfect codes to match the refined version of the Van Wee bound:

\begin{definition}
We say a non-perfect $(n,M,d)R$ code is a \emph{nearly-perfect covering code} if~\eqref{eq:newdef} holds with equality.
\end{definition}

We note that when $\ceil{d/2}=1$, \eqref{eq:newdef} agrees with~\eqref{eq:vwb}. Thus, our new definition of nearly-perfect codes matches that used in~\cite{BorEtzRot25}, since all the codes there had $\ceil{d/2}=1$.

With this new definition, there do exist nearly-perfect codes with $R\geq 2$, as the following examples show.

\begin{example}\label{ex:repeat}
Let $n$ be even, $n=2m$. We consider the following two codes:
\begin{align*}
    C &= \set*{ 0^{2m},1^{2m}}, & C'&=\set*{0^{2m},1^{2m-1}0}.
\end{align*}
The code $C$ is the well-known binary repetition code, with parameters $(2m,2,2m)m$. The second code is a nearly-repetition code, with parameters $(2m,2,2m-1)m$. We comment that it is straightforward to verify that the covering radius of both codes is indeed $m$: all vectors of weight at most $m$ are at distance at most $m$ from $0^{2m}$, while all the remaining vectors are at distance at most $m$ from $1^{2m}$ and $1^{2m-1}0$, respectively. The two codes are linear, and are not equivalent to each other. We also observe that if we denote the minimum distance of $C$ and $C'$ by $d$ and $d'$, respectively, then both codes have the same packing radius, $\floor{(d-1)/2}=\floor{(d'-1)/2}=m-1$.

We contend that both codes are nearly-perfect covering codes, i.e., they attain~\eqref{eq:newdef} with equality. Since $\ceil{d/2}=\ceil{d'/2}=m$, it is enough that we show this for $C$. Then 
\begin{align*}
\frac{n+1}{R+1}&=\frac{n+1}{(n+2)/2}=2-\frac{2}{n+2}, & \ceil*{ \frac{n+1}{R+1}} &= 2, & \frac{R+\ceil{d/2}}{R+1}&= \frac{n}{n/2+1}=2-\frac{4}{n+2}.
\end{align*}
Starting from the left-hand side of~\eqref{eq:newdef}, we get:
\[
   2\cdot \parenv*{\sum_{i=0}^{n/2}\binom{n}{i}-\binom{n}{n/2} \frac{\frac{2}{n+2}}{\frac{4}{n+2}}}=2\cdot \parenv*{\frac{1}{2}\parenv*{2^n-\binom{n}{n/2}}+\binom{n}{n/2}-\frac{1}{2}\binom{n}{n/2}}=2^n
   \]
\end{example}

\begin{example}\label{ex:len6}
Consider the following code:
\[C=\set*{000000,000111,111000,111111}.\]
One can easily verify the code is a $(6,4,3)2$ code. Plugging in these values into the left-hand side of~\eqref{eq:newdef} yields $2^6$, hence the code is a nearly-perfect covering code. This code is depicted in Figure~\ref{fig:len6}. Vectors are abbreviated using two digits, so that $ij$ indicates any vector which has weight $i$ in the first three coordinates, weight $j$ in the last three. For example, $02$ denotes any of the vectors $000110$, $000101$, or $000011$. In this notation, $C=\set{00,03,30,33}$. The packing-radius balls around the codewords are dashed, while the covering-radius balls around the codewords are solid.

Additionally, consider the following code:
\[
C'=\set*{000000,000111,111001,111110}.
\]
This is also a $(6,4,3)2$ nearly-perfect covering code (verification by inspection is straightforward). One can easily see that $C$ and $C'$ are not equivalent codes since the former has two codewords at distance $6$ from each other while the latter does not.
\end{example}

\begin{figure}
\begin{center}
    \begin{overpic}[scale=0.25]
        {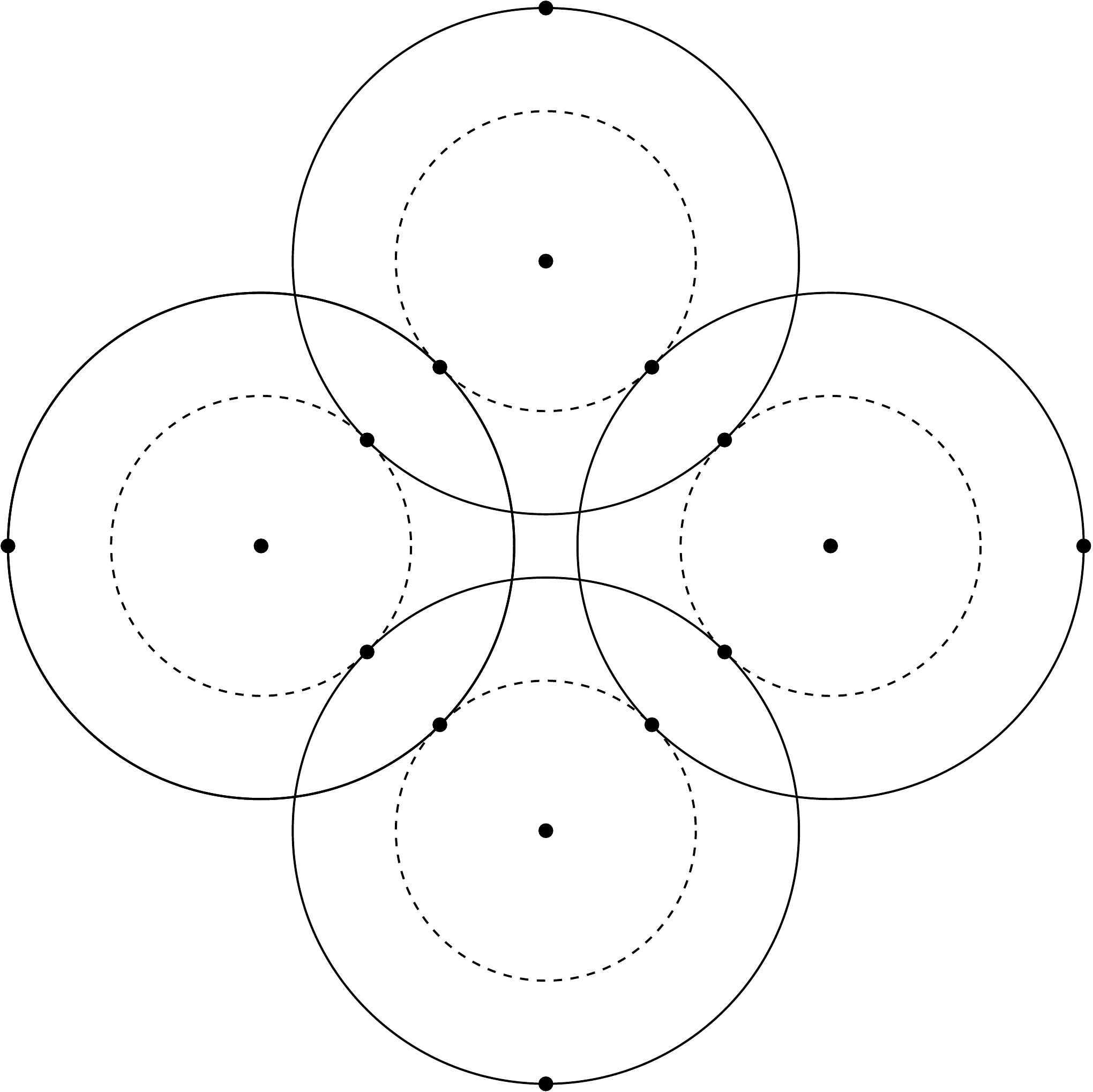}
        \put(18,49){$00$}
        \put(78,49){$33$}
        \put(48,79){$03$}
        \put(48,19){$30$}
        \put(2,49){$11$}
        \put(94,49){$22$}
        \put(48,94){$12$}
        \put(48,2){$21$}
        \put(28,55){$01$}
        \put(28,42){$10$}
        \put(68,55){$23$}
        \put(68,42){$32$}
        \put(41,69){$02$}
        \put(55,69){$13$}
        \put(41,28){$20$}
        \put(55,28){$31$}
    \end{overpic}
\end{center}
\caption{A depiction of the code $C$ from Example~\ref{ex:len6}.}
\label{fig:len6}
\end{figure}

\section{Properties of Nearly-Perfect Covering Codes}
\label{sec:properties}

In this section we study the structural properties of nearly-perfect covering codes. These will be used later to rule out the existence of such codes for various sets of parameters. The main approach we use is to analyze the consequences of having~\eqref{ineq2}-\eqref{ineq7} as equalities.

First, we show that all nearly-perfect codes are quasi-perfect. Recall that quasi-perfect codes are those codes for which the difference between the covering radius and packing radius is at most $1$.

\begin{lemma}\label{lemma:quasiperfect}
A nearly-perfect covering code is a quasi-perfect code, and thus $d\in\set{2R-1,2R}$.
\end{lemma}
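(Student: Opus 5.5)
The plan is to prove the two inequalities $d\leq 2R$ and $d\geq 2R-1$ separately. The upper bound uses only non-perfectness. The lower bound uses the fact that, for a nearly-perfect code, every inequality in the chain \eqref{ineq2}--\eqref{ineq7} (with \eqref{ineq1} dropped, as in the proof of Theorem~\ref{th:refined}) must be an equality.

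\textbf{Upper bound.} Suppose $d\geq 2R+1$. Then the packing radius $\floor{(d-1)/2}$ is at least $R$, so the balls of radius $R$ around the codewords are pairwise disjoint. Since $R$ is the covering radius, these same balls cover $\F_2^n$, so they tile the space. Hence $C$ is perfect, contradicting the definition of a nearly-perfect covering code. So $d\leq 2R$.

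\textbf{Lower bound.} Suppose instead that $d\leq 2R-2$, so in particular $R\geq 2$.
\begin{enumerate}
\item Take $c,c'\in C$ with $d(c,c')=d$, and let $t\eqdef\max(0,d-R)$. Walk from $c$ toward $c'$ along a shortest path and let $z$ be the vector at distance $t$ from $c$, hence at distance $d-t$ from $c'$.
\item Both distances are at most $R$, so $z$ lies in $B_R(c)\cap B_R(c')$ and $g(z)>0$.
\item Since $d\leq 2R-2$, we get $t\leq R-2$, so $d(z,C)\leq R-2$. Therefore $B_1(z)$ contains no point of $C_R$, i.e.\ $\abs{B_1(z)\cap C_R}=0$.
\item Equality in \eqref{ineq4} means $\abs{B_1(z)\cap C_R}=\mu$ for every over-covered $z$, so $\mu=0$.
\item Next I check that $g(\F_2^n)>0$. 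If $g(\F_2^n)=0$, the covering balls would be disjoint, which again makes the code perfect.
\item Since $g(\F_2^n)>0$, equality in \eqref{ineq2} gives $\mu=n+1-R-\ceil{d/2}$. Combined with $\mu=0$, this means $n+1=R+\ceil{d/2}$.
\end{enumerate}

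\textbf{Main obstacle: the degenerate case.} It remains to rule out $n+1=R+\ceil{d/2}$. In this case the denominator in \eqref{eq:newdef} equals $\mu/(R+1)=0$. The numerator $\ceil{\frac{n+1}{R+1}}-\frac{n+1}{R+1}$ must then vanish too; otherwise \eqref{eq:newdef} is meaningless and cannot hold with equality. I expect this degenerate situation to be the delicate part, and I would handle it through the chain directly.
\begin{enumerate}
\item Since $\ceil{d/2}\leq R-1$, we have $n+1\leq 2R-1<2(R+1)$. Also $n\geq R+\ceil{d/2}\geq R+1$, using the footnote bound $R\leq n-\ceil{d/2}$. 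Hence $1<\frac{n+1}{R+1}<2$.
\item It follows that $\ceil{\frac{n+1}{R+1}}-\frac{n+1}{R+1}>0$.
\item The right end \eqref{ineq7} of the chain is then strictly positive, because $C_R\neq\emptyset$ gives $2^n-M\sum_{i<R}\binom{n}{i}>0$.
\item The left side equals $\mu\cdot g(\F_2^n)=0$, which is impossible.
\end{enumerate}
This contradiction shows $d\geq 2R-1$. Together with $d\leq 2R$, we get $\floor{(d-1)/2}\in\{R-1,R-1\}$ when $d\in\{2R-1,2R\}$, i.e.\ the packing radius is exactly $R-1$ and the code is quasi-perfect.
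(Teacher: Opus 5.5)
Your upper bound and steps 1--6 of the lower bound are correct. The ``degenerate case'' is not an obstacle, however, and the argument you give for it is wrong. The footnote in the proof of Lemma~\ref{lemma:muineq} gives $R\leq n-\ceil{d/2}$, i.e.\ $n+1-R-\ceil{d/2}\geq 1$. So the conclusion $n+1=R+\ceil{d/2}$ of step~6 is already a contradiction. In fact, in your own step~1 you write $n\geq R+\ceil{d/2}$, which contradicts the case hypothesis outright.

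What you write instead contains two false claims. First, the denominator in \eqref{eq:newdef} is $a-\frac{R+\ceil{d/2}}{R+1}=\frac{n+1-R-\ceil{d/2}+b}{R+1}$, not $\mu/(R+1)$. Under $n+1=R+\ceil{d/2}$ it equals the numerator, so the fraction is $1$, not undefined. Second, $C_R\neq\emptyset$ does not imply $2^n-M\sum_{i=0}^{R-1}\binom{n}{i}>0$. One only has $\abs{C_R}\geq 2^n-M\sum_{i=0}^{R-1}\binom{n}{i}$, with strict inequality whenever radius-$(R-1)$ balls overlap, which they do when $d\leq 2R-2$. Moreover, equality in \eqref{eq:newdef} with the fraction equal to $1$ would force $M\sum_{i=0}^{R-1}\binom{n}{i}=2^n$. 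That makes the right end of the chain exactly $0$, so no contradiction comes from there.

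Replace that whole paragraph by the one-line appeal to the footnote bound. The same bound gives $n+1-R-\ceil{d/2}+b>0$, and this is what justifies your opening premise that equality in \eqref{eq:newdef} forces equality throughout the chain, since the rearrangement in Theorem~\ref{th:refined} divides by this quantity.

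With that repair, your proof is correct but takes a different route from the paper's. The paper uses equality in \eqref{ineq6} together with $\varepsilon\geq b>0$; note $b\neq 0$, since $b=0$ would collapse \eqref{eq:newdef} to $M\sum_{i=0}^{R}\binom{n}{i}=2^n$, a perfect code. This gives $\abs{C_R}=2^n-M\sum_{i=0}^{R-1}\binom{n}{i}$, so the radius-$(R-1)$ balls are disjoint and the packing radius is at least $R-1$. Non-perfectness then fixes it at exactly $R-1$, which is your upper-bound step.

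You instead exploit equality in \eqref{ineq2} and \eqref{ineq4}: an explicit over-covered vector at distance at most $R-2$ from $C$ forces $\mu=0$, contradicting $\mu=n+1-R-\ceil{d/2}\geq 1$. This is essentially the argument the paper uses to show that the original bound \eqref{eq:vwb} is never attained for $R\geq 2$, transplanted to the refined bound. The paper's route is shorter and reads the packing-radius bound off a single equality without constructing anything. Yours needs a geometric construction and the footnote bound, but uses only the $\mu$-part of the chain and the trivial fact $g(\F_2^n)>0$.
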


\begin{proof}
Since~\eqref{ineq6} has to be an equality, we have $\abs{C_R}=2^n-M\sum_{i=0}^{R-1}\binom{n}{i}$, which implies that the balls of radius $R-1$ centered at the codewords, are disjoint. Thus, the packing radius is at least $R-1$, and no more than the covering radius, $R$, so the code is quasi-perfect. Since by definition, a nearly-perfect covering code is non-perfect, the packing radius must be exactly $R-1$. Hence, $\floor{(d-1)/2}=R-1$, which proves that $d\in\set{2R-1,2R}$.
\end{proof}

It is interesting to note that nearly-perfect \emph{error-correcting} codes are also always quasi-perfect. Recall that these are only $(2^m-2, 2^{2^m-2-m},3)2$  and $(2^m-1,2^{2^m-2m},5)3$ codes (e.g., see~\cite[Section 2.3.5]{PleHuf03}). By plugging their parameters into~\eqref{eq:newdef}, we find that none of them are nearly-perfect covering. Thus, nearly-perfect error-correcting codes and nearly-perfect covering codes are disjoint subsets of the quasi-perfect codes.

In what follows, we will frequently need to deal with the quantity $\ceil{(n+1)/(R+1)}$. For convenience, we denote
\begin{align*}
a&\eqdef\ceil*{\frac{n+1}{R+1}}, \\
b&\eqdef (R+1)a-(n+1) = (R+1)\ceil*{\frac{n+1}{R+1}}-(n+1),
\end{align*}
so $0\leq b \leq R$. Now, using Lemma~\ref{lemma:quasiperfect}, we can plug in $\ceil{d/2}=R$ into~\eqref{eq:newdef} to get a more compact form:

\begin{corollary}\label{coro:defwith2Rreplaced}
An $(n,M,d)R$ nearly-perfect covering code satisfies:
    \begin{equation}\label{eq:newdefevaluatedd}
    M\parenv*{\sum_{i=0}^R\binom{n}{i}-\binom{n}{R}\frac{b}{n+1+b-2R}}=2^n,    
    \end{equation}
    and additionally, $b\neq 0$.
\end{corollary}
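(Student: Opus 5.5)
The plan is to substitute the quantities from Lemma~\ref{lemma:quasiperfect} and the definitions of $a$ and $b$ into the equality version of~\eqref{eq:newdef}, and then to show separately that $b=0$ would force the code to be perfect.

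\textbf{The identity.} By Lemma~\ref{lemma:quasiperfect}, $d\in\set{2R-1,2R}$, so in both cases $\ceil{d/2}=R$. The fraction in~\eqref{eq:newdef} then has numerator
\[
\ceil*{\frac{n+1}{R+1}}-\frac{n+1}{R+1}=a-\frac{n+1}{R+1}=\frac{b}{R+1},
\]
using $(R+1)a=n+1+b$. Its denominator is
\[
a-\frac{2R}{R+1}=\frac{(R+1)a-2R}{R+1}=\frac{n+1+b-2R}{R+1}.
\]
The factors of $R+1$ cancel, so the fraction equals $b/(n+1+b-2R)$. Since a nearly-perfect covering code attains~\eqref{eq:newdef} with equality, this gives exactly~\eqref{eq:newdefevaluatedd}.

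The denominator must be positive. The footnote in the proof of Lemma~\ref{lemma:muineq} gives $R\leq n-\ceil{d/2}=n-R$, so $n\geq 2R$ and hence $n+1+b-2R\geq 1>0$. This step needs $M\geq 2$, so that $d$ is defined. That holds because $M=1$ gives a perfect code, and a nearly-perfect code is by definition non-perfect.

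\textbf{Showing $b\neq 0$.} Suppose $b=0$. Then~\eqref{eq:newdefevaluatedd} reduces to $M\sum_{i=0}^R\binom{n}{i}=2^n$, so the ball-covering bound~\eqref{eq:bcb} holds with equality. Every vector is therefore covered exactly once, i.e.\ the radius-$R$ balls around codewords are pairwise disjoint. This makes the packing radius at least $R$, contradicting the fact from the proof of Lemma~\ref{lemma:quasiperfect} that it equals $R-1$. Equivalently, the code would be perfect, contradicting non-perfectness. Hence $b\neq 0$.

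No step here is a real obstacle. The only places needing care are confirming $\ceil{d/2}=R$ for both parities of $d$ and checking that the denominator is nonzero.
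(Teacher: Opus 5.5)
Your proposal is correct and is the paper's own argument: substitute $\ceil{d/2}=R$ from Lemma~\ref{lemma:quasiperfect} and $(R+1)a=n+1+b$ into~\eqref{eq:newdef}, then note that $b=0$ would give $M\sum_{i=0}^R\binom{n}{i}=2^n$, which makes the code perfect. Your extra checks are sound and simply make the paper's brief remark explicit: positivity of $n+1+b-2R$ via $n\geq 2R$, and reading the $b=0$ case as equality in the ball-covering bound~\eqref{eq:bcb}, which is more accurate than the paper's ``ball-packing'' wording.
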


We note that when $b=0$, \eqref{eq:newdefevaluatedd} becomes the ball-packing bound, and codes attaining it with equality are perfect, so by definition, not nearly-perfect.The last corollary is particularly useful when ruling out parameter sets for nearly-perfect codes using number-theoretic tools. Next, we turn to study the properties of over-covered vectors.

\begin{lemma}\label{lemma:epsilonequality}
In a nearly-perfect covering code, for any $z\in C_R$
\begin{align*}
g(B_1(z))=(R+1)\parenv*{\ceil*{\frac{n+1}{R+1}}-\frac{n+1}{R+1}}&=b,\\
\abs*{B_{R+1}(z)\cap C}=\ceil*{\frac{n+1}{R+1}}&=a.
\end{align*}
\end{lemma}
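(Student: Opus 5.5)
The plan is to use that a nearly-perfect covering code turns every inequality in the chain \eqref{ineq2}--\eqref{ineq7} into an equality. This is exactly how the proof of Theorem~\ref{th:refined} proceeds. Equality in \eqref{eq:newdef} means the two endpoints of that chain coincide. Every intermediate step is an inequality in the same direction, so each must hold with equality.

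First, equality in \eqref{ineq7} gives
\[
\varepsilon\cdot\parenv*{2^n-M\sum_{i=0}^{R-1}\binom{n}{i}} = b\cdot\parenv*{2^n-M\sum_{i=0}^{R-1}\binom{n}{i}},
\]
using $(R+1)\parenv*{\ceil*{\frac{n+1}{R+1}}-\frac{n+1}{R+1}}=b$. Since the code is not perfect, $C_R\neq\emptyset$, and equality in \eqref{ineq6} shows the common factor equals $\abs{C_R}>0$. Dividing by it yields $\varepsilon=b$.

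Next, recall that $\varepsilon$ is the average of $g(B_1(z))$ over $z\in C_R$. By Lemma~\ref{lemma:epsilon}, each term satisfies $g(B_1(z))\geq b$. An average of quantities each at least $b$ equals $b$ only if every term equals $b$. Hence $g(B_1(z))=b$ for every $z\in C_R$, which is the first claimed identity.

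Finally, the equality clause of Lemma~\ref{lemma:epsilon} states that $g(B_1(z))=b$ holds if and only if $\abs{B_{R+1}(z)\cap C}=\ceil{\frac{n+1}{R+1}}=a$. Applying it to each $z\in C_R$ gives the second identity.

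The only delicate point is justifying that all steps of the chain are tight. The endpoints of the chain used in Theorem~\ref{th:refined}, namely the multline inequality, rearrange exactly into \eqref{eq:newdef}. This rearrangement is reversible, since it consists of multiplying by the positive quantity $n+1-R-\ceil{d/2}$ and regrouping terms. Therefore equality in \eqref{eq:newdef} forces equality of the endpoints, and hence of every intermediate step. Positivity of $n+1-R-\ceil{d/2}$ follows from footnote's bound $R\leq n-\ceil{d/2}$.
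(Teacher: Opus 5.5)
Your proof is correct and follows the paper's argument: tightness of \eqref{ineq7} forces \eqref{eq:ineqepsilon} to be tight at every $x\in C_R$, and your averaging step simply makes this explicit. The equality clause of Lemma~\ref{lemma:epsilon} then gives $\abs{B_{R+1}(z)\cap C}=a$; one minor slip is that the reversible rearrangement between the endpoint inequality and \eqref{eq:newdef} multiplies by $n+1-R-\ceil{d/2}+b$ (which is $R+1$ times the denominator in \eqref{eq:newdef}), not by $n+1-R-\ceil{d/2}$ alone, but this factor is equally positive, so your argument stands.
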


\begin{proof}
\eqref{ineq7} has to be an equality, and thus \eqref{eq:ineqepsilon} in Lemma~\ref{lemma:epsilon} has to be an equality for every $x\in C_R$.
\end{proof}

We remark that by instantiating Lemma~\ref{lemma:epsilonequality} with $R=1$ we obtain \cite[Theorem 3]{BorEtzRot25} as a Corollary.

\begin{lemma}
\label{lemma:muequalitycond}
If $C$ is an $(n,M,d)R$ nearly-perfect covering code, and $z\in\F_2^n$ is over-covered, then it is covered by exactly two codewords, $c$ and $c'$, and:
    \begin{itemize}
        \item If $d$ is even, then $d(z,C)=R=d(z,c)=d(z,c')$, and $d(c,c')=d$.
        \item If $d$ is odd and $d(z,C)=R$, then $d(z,c)=d(z,c')=R$ and $d(c,c')=d+1$.
        \item If $d$ is odd and $d(z,C)=R-1$, then $d(z,c)=R$, $d(z,c')=R-1$, and $d(c,c')=d$.
    \end{itemize}
\end{lemma}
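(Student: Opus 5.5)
The plan is to exploit the fact that, for a nearly-perfect covering code, every inequality in the chain \eqref{ineq2}--\eqref{ineq7} holds with equality. Lemma~\ref{lemma:quasiperfect} gives $d\in\set{2R-1,2R}$, so $\ceil{d/2}=R$. Equality in \eqref{ineq2} and \eqref{ineq4} means that every over-covered $z$ satisfies
\[
\abs{B_1(z)\cap C_R}=\mu=n+1-R-\ceil{d/2}=n+1-2R .
\]
This quantity is positive: a non-perfect code has $M\geq 2$, and the footnote bound $R\leq n-\ceil{d/2}$ gives $n\geq 2R$. In particular $d(z,C)\in\set{R-1,R}$. I will then go back into the proof of Lemma~\ref{lemma:muineq} and ask which of its inequalities must now be tight.

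\textbf{Exactly two covering codewords.} Suppose $z$ were covered by three distinct codewords $c,c',c''$, with $y=z-c$, $y'=z-c'$, $y''=z-c''$ in $\widetilde{T}(z)$. Then $\abs{\supp(\widetilde{T}(z))}\geq\abs{A\cup A'\cup A''}$. Tightness forces $\abs{A\cup A'}$ to equal the bound from the relevant case, i.e. $R+\ceil{d/2}$ or $R-1+\ceil{d/2}$, and the same must hold for each pair. So $A''\subseteq A\cup A'$, and likewise for every pair. With $\ceil{d/2}=R$, Case~1 gives $\abs{A\cap A'}\leq 0$, so $A,A'$ are disjoint $R$-sets. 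Then $A''\subseteq A\cup A'$ with $\abs{A''}\leq R$, together with $A''$ disjoint from both (by the same pairwise computation), forces $A''=\emptyset$. That would make $z$ a codeword, contradicting $d(z,C)\geq R-1\geq 1$ when $R\geq 2$. The cases $R-1$ are handled the same way: the pairwise intersections are forced to be empty or tiny, and a third support cannot fit inside $A\cup A'$. For $R=1$ the statement is essentially in \cite{BorEtzRot25}, though the same counting should work. I expect this step to be the main obstacle, mainly in organizing the Case~2 mixtures (distances $R$ and $R-1$) cleanly.

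\textbf{Distances.} It remains to pin down the distances, using tightness of
\[
\abs{B_1(z)\cap C_R}\leq 1+n-\abs{\supp\widetilde{T}(z)} \quad\text{(for $z\in C_R$)}, \qquad
\abs{B_1(z)\cap C_R}\leq n-\abs{\supp\widetilde{T}(z)} \quad\text{(otherwise)}.
\]

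\emph{$d$ even.} Case~2.a would give $d(c,c')\leq 2R-2<d$, which is impossible. Case~2.b gives $\abs{\supp}\geq R-1+\ceil{(d+1)/2}>R-1+\ceil{d/2}$, strictly, which contradicts the required equality. Hence $d(z,C)=R$ and we are in Case~1. There $\abs{A\cup A'}=2R$ with $A\cap A'=\emptyset$, so $d(c,c')=2R=d$.

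\emph{$d=2R-1$ and $d(z,C)=R$.} Case~1 applies, and tightness gives $\abs{A\cap A'}=R-\ceil{d/2}=0$. Hence $d(c,c')=2R=d+1$.

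\emph{$d=2R-1$ and $d(z,C)=R-1$.} Case~2.a would need $d(c,c')\leq 2R-2<d$, which is impossible, so we are in Case~2.b. There $d(c,c')=2R-1-2\abs{A\cap A'}\geq d=2R-1$ forces $A\cap A'=\emptyset$ and $d(c,c')=d$.
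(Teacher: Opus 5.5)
Your proposal is correct and follows essentially the paper's route: equality in \eqref{ineq2} and \eqref{ineq4} forces $\supp(\widetilde{T}(z))=A\cup A'$ for every pair of covering codewords, and you then run through Cases~1, 2.a and 2.b of Lemma~\ref{lemma:muineq}; your early substitution $\ceil{d/2}=R$ makes the supports of any two covering codewords at distance $R$ disjoint, which replaces the paper's Venn-diagram system by a one-line contradiction. The $d(z,C)=R-1$ three-codeword case that you only sketch closes as in the paper (the third codeword must be at distance $R$, and together with the other distance-$R$ codeword it forces $\abs{\supp(\widetilde{T}(z))}\geq 2R>2R-1$), and no appeal to \cite{BorEtzRot25} is needed for $R=1$, since $\abs{A''}=R\geq 1$ already contradicts $A''=\emptyset$.
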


\begin{proof}
Since~\eqref{ineq2} has to be an equality, $\mu=n+1-R-\ceil{d/2}$, and since \eqref{ineq4} has to be an equality, $\abs{B_1(z)\cap C_R}=\mu$ for every over-covered $z$. We employ the same notations as in the proof of Lemma~\ref{lemma:muineq}. We deduce from the proof of Lemma~\ref{lemma:muineq} that if $d(z,C)=R$, then $\abs{\supp(\widetilde{T}(z))}=R+\ceil{d/2}$, and if $d(z,C)=R-1$, then $\abs{\supp(\widetilde{T}(z))}=R-1+\ceil{d/2}$. Also, we need to have $\abs{\supp(\widetilde{T}(z))}=\abs{A\cup A'}$, and thus $\supp(\widetilde{T}(z))= A\cup A'$. 

We note that if $z$ is over-covered, then $d(z,C)=R$ or $R-1$ since the code is quasi-perfect. Now we split the analysis according to $d(z,C)$ and the parity of $d$.
    
\textbf{Case 1:} $d$ even and $d(z,C)=R$. In this case, $d(z,c)=d(z,c')=R$, and from the proof of Case 1 in Lemma~\ref{lemma:muineq} we see that to have equality we need $\abs{\supp (\widetilde{T}(z))}=R+d/2$, so $\abs{A\cap A'}=R-d/2$, hence 
\[
d(c,c')=d(y,y')=2R-2(R-d/2)=d,
\]
as we wanted. If there were a third codeword $c''$ covering $z$, where $\supp (y'')=\supp (z-c'')=A''$, by repeating the same argument as before we would need to have $\supp (\widetilde{T}(z))=A\cup A'= A'\cup A''=A''\cup A$, so $A''\subseteq A\cup A'$, and similarly $A\subseteq A'\cup A''$, $A'\subseteq A\cup A''$. Define the following quantities,
\begin{align*}
s_1 &\eqdef \abs*{(A\cap A')\setminus A''}, &
s_2 &\eqdef \abs*{(A\cap A'')\setminus A'}, \\
s_3 &\eqdef \abs*{(A'\cap A'')\setminus A}, &
s_4 &\eqdef \abs*{A\cap A' \cap A''}.
\end{align*}
For ease of presentation, the sizes of the relevant subsets are presented in the following Venn diagram:
\begin{center}
\begin{overpic}[scale=0.4]
    {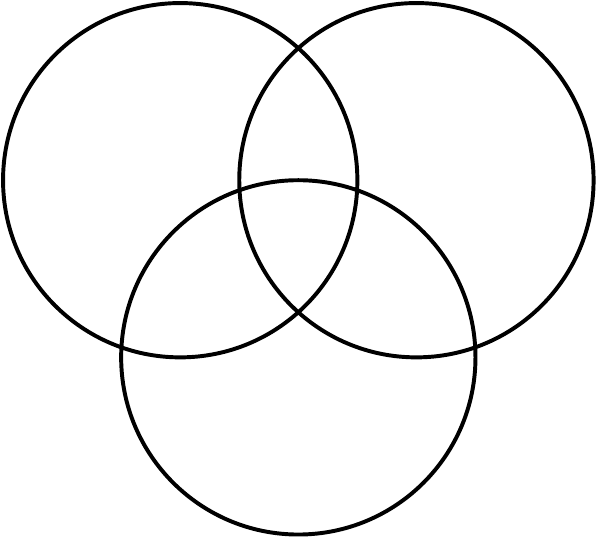}
    \put(-10,58){$A$}
    \put(102,58){$A'$}
    \put(14,5){$A''$}
    \put(46,65){$s_1$}
    \put(31,40){$s_2$}
    \put(62,40){$s_3$}
    \put(46,49){$s_4$}
    \put(20,65){$0$}
    \put(75,65){$0$}
    \put(48,15){$0$}
\end{overpic}
\end{center}
We now have the following equations:
\begin{align*}
s_1+s_4 = \abs*{A\cap A'} &= R-d/2, &
s_1+s_2+s_4 = \abs{A} &= R, \\
s_2+s_4 = \abs*{A\cap A''} &= R-d/2, &
s_1+s_3+s_4 = \abs{A'} &= R, \\
s_3+s_4 = \abs*{A'\cap A''} &= R-d/2, &
s_2+s_3+s_4 = \abs{A''} &= R. \\
\end{align*}
Solving these equations yields $s_4=R-d$. Because $d$ is even, by Lemma~\ref{lemma:quasiperfect}, $d=2R$ and $s_4=-R<0$, a contradiction. It follows that $z$ cannot be over-covered by three codewords.

\textbf{Case 2:} $d$ even and $d(z,C)=R-1$. Since the code is quasi-perfect, we cannot have both $c$ and $c'$ at distance $R-1$ from $z$, so we may suppose $d(z,c)=R$ and $d(z,c')=R-1$. By the proof of Case 2.b in Lemma~\ref{lemma:muineq}, in this case $\abs{\supp (\widetilde{T}(z))}\geq R-1+\ceil{(d+1)/2}\geq R+d/2$, since $d$ is even. But in order to have equality, we need $\abs{\supp (\widetilde{T}(z))}=R-1+d/2$, a contradiction. This case is thus not viable. 

\textbf{Case 3:} $d$ odd and $d(z,C)=R$. This time we need $\abs{\supp (\widetilde{T}(z))}=R+\ceil{d/2}=R+(d+1)/2$, and $\abs{A\cap A'}=R-(d+1)/2$, so
\[
d(c,c')=d(y,y')=\abs{A}+\abs{A'}-2\abs{A\cap A'}=2R-(2R-(d+1))=d+1,
\]
as claimed. 

If there were a third codeword $c''$ covering $z$, with the same notations as before, we would have:
\begin{align*}
s_1+s_4 = \abs*{A\cap A'} &= R-(d+1)/2, &
s_1+s_2+s_4 = \abs{A} &= R, \\
s_2+s_4 = \abs*{A\cap A''} &= R-(d+1)/2, &
s_1+s_3+s_4 = \abs{A'} &= R, \\
s_3+s_4 = \abs*{A'\cap A''} &= R-(d+1)/2, &
s_2+s_3+s_4 = \abs{A''} &= R. \\
\end{align*}
Solving these equations yields $s_4=R-(d+1)= R-2R<0$, a contradiction. So $z$ cannot be covered by more than two codewords.

\textbf{Case 4:} $d$ odd and $d(z,C)=R-1$. Since the code is quasi-perfect, we have $d(z,c)=R$ and $d(z,c')=R-1$. By the proof of Case 2.b in Lemma~\ref{lemma:muineq}, $\abs{\supp (\widetilde{T}(z))}=R-1+(d+1)/2$ and $\abs{A\cap A'}=R-(d+1)/2$. Thus, 
\[
d(c,c')=d(y_1,y_2)=\abs{A}+\abs{A'}-2\abs{A\cap A'}=2R-1-(2R-(d+1))=d.
\]

Suppose there were a third codeword $c''$ covering $z$. We know it cannot be at distance $R-1$ from $z$ since $c'$ already is, so $d(c'',z)=R$. But if there were two codewords at distance $R$ from $z$, then by the computations in Case 1 of the proof of Lemma \ref{lemma:muineq} we would have $\abs{\supp (\widetilde{T}(z))}\geq R+ \ceil{d/2}$, a contradiction, since we need $\abs{\supp (\widetilde{T}(z))}=R-1+\ceil{d/2}$.    
\end{proof}

\section{The Case of $R=2$}
\label{sec:R2}

In~\cite{BorEtzRot25}, nearly-perfect covering codes with $R=1$ were studied. It was proved there that their parameters are exactly $(2^m, 2^{2^m-m},2)1$ or $(2^m,2^{2^m-m},1)1$. With our results above, we extend the treatment beyond $R=1$. In this section we prove no non-perfect $(n,M,d)2$ code exists, except in the parameters of Example~\ref{ex:repeat} and Example~\ref{ex:len6}. Additionally, any such code must be equivalent to the list of four codes given in these examples. 

\subsection{Parameter Classification}

We first prove the claim about the possible code parameters. To do so, we need two technical lemmas:

\begin{lemma}
\label{lem:no6442code}
There is no $(6,4,4)2$ code.
\end{lemma}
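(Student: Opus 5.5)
The plan is to classify all $(6,4,4)$ codes up to equivalence, and then exhibit a vector at distance at least $3$ from each one, which shows that the covering radius exceeds $2$. Translating the code by a codeword (an isometry of the Hamming space) does not change $n$, $M$, $d$ or $R$, so I assume $0^6\in C$. The three other codewords then have weight at least $4$ and are pairwise at distance at least $4$.

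\textbf{Step 1: ruling out weights $5$ and $6$.} I use the identity $d(x,y)=\wt(x)+\wt(y)-2\abs{\supp(x)\cap\supp(y)}$ together with the fact that two supports inside $\set{1,\dots,6}$ of sizes $w_1,w_2$ meet in at least $w_1+w_2-6$ positions.
\begin{itemize}
\item A codeword of weight $6$ forces every other nonzero codeword to have weight at most $2$, which contradicts $d=4$.
\item Let $x$ have weight $5$. A second codeword $y$ of weight $4$ satisfies $\abs{\supp(x)\cap\supp(y)}\geq 3$, so $d(x,y)\leq 9-6=3$.
\item Two codewords of weight $5$ intersect in at least $4$ positions, so their distance is at most $2$.
\end{itemize}
Hence all three nonzero codewords have weight exactly $4$.

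\textbf{Step 2: pinning down the code.} For two weight-$4$ codewords with intersection size $o$, we need $8-2o\geq 4$, so $o\leq 2$. Two $4$-subsets of a $6$-set also satisfy $o\geq 2$, so $o=2$ for every pair. Passing to complements, we get three $2$-subsets of $\set{1,\dots,6}$ that are pairwise disjoint, since $\abs{\overline{A}\cap\overline{B}}=6-8+o=0$. Such a triple of complements is a partition of the coordinates into three pairs, unique up to a coordinate permutation. Therefore
\[
C=\set{000000,111100,110011,001111}
\]
up to equivalence.

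\textbf{Step 3: exhibiting a far vector.} I take $v=101010$. A direct count gives $d(v,0^6)=3$, and $v$ also has distance $3$ from each of $111100$, $110011$ and $001111$. So $d(v,C)=3$, and $R\geq 3$. This contradicts $R=2$, so no $(6,4,4)2$ code exists.

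The only step needing real care is the classification in Steps 1 and 2. It has to be exhaustive, and in particular it relies on the translation to $0\in C$ being harmless. The final verification in Step 3 is a routine check once the code is pinned down.
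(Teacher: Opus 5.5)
Your proof is correct and is essentially the paper's argument: translate so that $0\in C$, force $C=\set{000000,111100,110011,001111}$ up to a coordinate permutation because the complements of the other codewords must be pairwise disjoint $2$-sets, and then exhibit a vector at distance $3$ from every codeword. The differences are cosmetic. The paper skips your Step~1 casework by noting that complements of weight $\leq 2$ at mutual distance $\geq 4$ must have weight exactly $2$ and disjoint supports. It also uses the witness $010101$ where you use $101010$, and both work.
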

\begin{proof}
Assume to the contrary that such a code $C$ exists. By translating the code, we may assume, without loss of generality, that it contains the codeword $0$, i.e., $C=\set{0,x,y,z}$, with $x,y,z\in\F_2^6$. By the minimum distance of the code, $\wt(x),\wt(y),\wt(z)\geq 4$. Let $\overline{x},\overline{y},\overline{z}$ be the binary complements of $x,y,z$, respectively. Thus, $\wt(\overline{x}),\wt(\overline{y}),\wt(\overline{z})\leq 2$. Complementing the vectors is translating them (adding the all-ones vector), and so the distances are preserved. Thus, $d(\overline{x},\overline{y}),d(\overline{x},\overline{z}),d(\overline{y},\overline{z})\geq 4$, to satisfy the minimum distance of the code. Because of the complements' weights, this is only possible if
\[
d(\overline{x},\overline{y}),d(\overline{x},\overline{z}),d(\overline{y},\overline{z})= 4,
\]
and their supports are pairwise disjoint. It follows that, up to a permutation of coordinates,
\[
C=\set{000000,111100,110011,001111}.
\]

None of the above assumptions (permutation and translation) change the covering radius of the code $C$. However, $C$ has covering radius at least $3$, since the distance of the vector $010101$ from all the codewords is
\[
d(010101,000000)=d(010101,111100)=d(010101,110011)=d(010101,001111)=3.
\]
\end{proof}

\begin{lemma}
\label{lem:dio}
The solutions of the Diophantine equation
\[
n^3-2n^2+n-4=2^z,
\]
are $(n,z)=(3,3)$ and $(n,z)=(4,5)$.
\end{lemma}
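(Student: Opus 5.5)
The plan is to rewrite the equation as
\[
n(n-1)^2-4=2^z
\]
and use elementary $2$-adic valuation arguments to cut down to a single residual family. That family I would then handle either with a further congruence or by reducing to a finite list of elliptic curves. First note that $n\leq 1$ makes the left-hand side negative and $n=2$ gives $-2$, so we may take $n\geq 3$, and hence $z\geq 1$.

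\textbf{Case $n$ odd.} Write $n=2k+1$ with $k\geq 1$. Then the equation becomes $4(nk^2-1)=2^z$, so $nk^2-1$ must be a power of $2$.
\begin{itemize}
\item If $k$ is even, $nk^2-1$ is odd, so it equals $1$ and $nk^2=2$. This is impossible.
\item If $k$ is odd, then $k^2\equiv 1 \pmod 8$. Hence $nk^2-1\equiv n-1=2k\equiv 2\pmod 4$, so $nk^2-1=2$. This forces $k=1$, $n=3$, and then $z=3$.
\end{itemize}

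\textbf{Case $n$ even.} Write $n=2m$, so the equation becomes $2\bigl(m(2m-1)^2-2\bigr)=2^z$.
\begin{itemize}
\item If $m$ is odd, the bracket is odd, so it equals $1$. That gives $m(2m-1)^2=3$, which has no solution.
\item If $m=2j$, the equation becomes $4\bigl(j(4j-1)^2-1\bigr)=2^z$. For $j$ even the bracket is odd, and the same argument rules it out.
\item For $j$ odd, I would use the decomposition
\[
j(4j-1)^2-1=(j-1)+8j^2(2j-1).
\]
Here the second term has $2$-adic valuation exactly $3$. So if $v_2(j-1)\neq 3$, the whole expression has valuation at most $3$. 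Being a power of $2$, it is then at most $8$, while it is at least $9j-1$. This forces $j=1$, i.e.\ $n=4$ and $32=2^5$, so $z=5$.
\end{itemize}

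\textbf{Residual case.} What remains is $j\equiv 9\pmod{16}$, with $j(4j-1)^2-1=2^w$ and $w\geq 4$. I expect this to be the main obstacle, since pure $2$-adic information no longer suffices. My first attempt would be further congruences: reducing modulo $j$ gives $2^w\equiv -1\pmod j$, and reducing modulo $4j-1$ gives $2^w\equiv -1 \pmod{4j-1}$. I would then try to combine these with higher powers of $2$ to reach a contradiction.

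If that does not close quickly, I would fall back on a standard reduction. Write $w=3t+r$ with $r\in\{0,1,2\}$ and set $u=2^t$. Each value of $r$ turns the equation into a cubic curve $j(4j-1)^2-1=2^r u^3$ in the integers $j,u$. After a birational change of variables (multiplying through by suitable constants to make it monic), each is an elliptic curve with a computable finite set of integral points. Checking those points should show that none satisfies $j\equiv 9\pmod{16}$ with $u$ a power of $2$.

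Combining all cases, the only solutions are $(n,z)=(3,3)$ and $(4,5)$.
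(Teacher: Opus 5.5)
Your $2$-adic reductions are correct. Odd $n$ gives only $n=3$, $n\equiv 2\pmod 4$ gives nothing, and for $n=4j$ outside the class $j\equiv 9\pmod{16}$ you correctly isolate $n=4$. The gap is that residual class $j\equiv 9\pmod{16}$ (i.e.\ $n\equiv 36\pmod{64}$), which you leave as a plan. It is not a loose end: it carries the entire difficulty. Put $g(j)=j(4j-1)^2-1=16j^3-8j^2+j-1$. Then $g'(j)=48j^2-16j+1$ is odd, so by Hensel's lemma $g$ has a root $j_0\in\Z_2$. Since $g(j)\equiv j-9\pmod{16}$ for odd $j$, we get $j_0\equiv 9\pmod{16}$. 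So $v_2(g(j))=v_2(j-j_0)$ is at least $4$ and unbounded on precisely your residual class, and no further $2$-adic argument can close it. Your congruences $2^w\equiv -1$ modulo $j$ and modulo $4j-1$ are never turned into a contradiction. The most they readily give is that $w$ is odd: $4j-1\equiv 3\pmod 4$ has a prime divisor $p\equiv 3\pmod 4$, for which $-1$ is a non-residue. That still leaves $n^3-2n^2+n-4=2x^2$ with $n=4j$, a genus-one problem that needs global input.

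Your fallback has the same character as the paper's proof, which reduces to genus-one curves and computes their integral points by machine. However, it is not executed: ``checking those points should show'' is exactly the claim that has to be proved. It is also less routine than you suggest. The curves $j(4j-1)^2-1=2^ru^3$ are plane cubics of the shape $f=ku^3$. Making the cubic monic (multiply by $4$ and set $n=4j$) just returns $n^3-2n^2+n-4=2^{r+2}u^3$, which is still not a Weierstrass equation. A birational map to Weierstrass form does not in general send integral points to integral points, so a standard integral-point routine does not apply directly.

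The paper avoids this by splitting on the parity of $z$ at the outset. Even $z$ gives $n^3-2n^2+n-4=x^2$, which has no integral points. Odd $z$ gives $n^3-2n^2+n-4=2x^2$; the scaling $N=2n$, $X=4x$ turns this into the Weierstrass curve $N^3-4N^2+4N-32=X^2$, and integral solutions map to integral points. Those points give $n\in\{3,4,67\}$, and $382$ is not a power of $2$.

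Your elementary work combines well with this. Once $w$ is forced to be odd as above, you only need that single odd-$z$ curve. None of its integral points $n\in\{3,4,67\}$ satisfies $n\equiv 36\pmod{64}$, which closes the residual case. Alternatively, you could treat $n^3-2n^2m+nm^2-4m^3=2^z$ directly as a Thue--Mahler equation, since the cubic is irreducible over $\Q$.
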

\begin{proof}
We divide our analysis into two cases, depending on the parity of $z$.

\textbf{Case 1:} $z$ is even. Define $x=2^{z/2}$, and the equation becomes
\[
n^3-2n^2+n-4=x^2.
\]
This is a standard (Weierstrass form) elliptic curve. The integral solutions of such an equation may be found algorithmically (e.g., see~\cite[Chapter IX]{Sil86}). Using such an algorithm\footnote{We used SageMath.} we find this equation has no integral solutions.

\textbf{Case 2:} $z$ is odd. Define $x=2^{(z-1)/2}$, and the equation becomes
\begin{equation}
\label{eq:dio1}
n^3-2n^2+n-4=2x^2.
\end{equation}
To arrive at a Weierstrass form, we multiply both sides by $8$, and define $N=2n$ and $X=4x$. With this notation the equation is
\begin{equation}
\label{eq:dio2}
N^3-4N^2+4N-32=X^2,
\end{equation}
and crucially, all integral solutions of~\eqref{eq:dio1} correspond to integral solutions of~\eqref{eq:dio2} (but not necessarily vice versa). Since~\eqref{eq:dio2} is an elliptic curve, we find its integral points, as before, and obtain the following solutions:
\[
(N,X) \in \set*{ (6,\pm 8), (8,\pm 16), (134,\pm 1528)},
\]
which imply the following solutions to~\eqref{eq:dio1}:
\[
(n,x) \in \set*{ (3,\pm 2), (4,\pm 4), (67,\pm 382)}.
\]
Solutions with negative $x$ are not relevant, since $x=2^{(z-1)/2}$, and so we remove three of six solutions. Of the remaining three, $(n,x)=(67,382)$ is impossible since $382$ is not a power of $2$. The remaining two solutions may be translated back to solutions of the original equation, in terms of $n$ and $z$, giving
\[
(n,z)\in\set*{(3,3),(4,5)},
\]
as claimed.
\end{proof}

We are now in a position to state and prove the main claim about the code parameters.

\begin{theorem}\label{thm:radius2}
When the covering radius is $R=2$, non-perfect nearly-perfect covering codes exist only with parameters $(4,2,3)2$, $(4,2,4)2$, or $(6,4,3)2$.
\end{theorem}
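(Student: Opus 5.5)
With $R=2$, Corollary~\ref{coro:defwith2Rreplaced} reduces the question to a Diophantine one. By Lemma~\ref{lemma:quasiperfect}, $d\in\set{3,4}$. Set $a=\ceil{(n+1)/3}$ and $b=3a-(n+1)$. Since $b\neq 0$, we have $b\in\set{1,2}$, so $n\equiv 1$ or $0\pmod 3$. Substituting into~\eqref{eq:newdefevaluatedd} gives
\[
M\parenv*{1+n+\binom{n}{2}-\binom{n}{2}\frac{b}{n+b-3}}=2^n .
\]
I will treat $b=1$ (so $n\equiv 1 \pmod 3$) and $b=2$ (so $n\equiv 0\pmod 3$) separately.

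\textbf{Case $b=1$.} The bracket is $1+n+\binom n2\frac{n-3}{n-2}=\frac{2(n-2)(1+n)+n(n-1)(n-3)}{2(n-2)}$. Expanding the numerator gives $n^3-2n^2+n-4$. The condition becomes
\[
M(n^3-2n^2+n-4)=2^{n+1}(n-2).
\]
So $n^3-2n^2+n-4$ divides $2^{n+1}(n-2)$. I will then use $\gcd(n^3-2n^2+n-4,\,n-2)\mid (8-8+2-4)=-2$. Hence $n^3-2n^2+n-4$ equals a power of $2$, or twice a power of $2$; both are of the form $2^z$. Lemma~\ref{lem:dio} then gives $n\in\set{3,4}$, and $n\equiv1\pmod 3$ forces $n=4$. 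We obtain $M=2^5\cdot 2/32=2$, i.e.\ $(4,2,d)2$ with $d\in\set{3,4}$. Both cases are realized by Example~\ref{ex:repeat}.

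\textbf{Case $b=2$.} The bracket is $1+n+\binom n2\frac{n-3}{n-1}=1+n+\frac{n(n-3)}{2}=\frac{n^2-n+2}{2}$. So $M(n^2-n+2)=2^{n+1}$, which forces $n^2-n+2=2^k$. This is the Ramanujan--Nagell type equation: writing $(2n-1)^2+7=2^{k+2}$ gives $2n-1\in\set{1,3,5,11,181}$, so $n\in\set{1,2,3,6,91}$. I need $n\equiv 0\pmod 3$ and $n\ge 2R=4$ (from $d\geq 3$ and $M\ge2$). That leaves $n\in\set{6}$, since $n=91\equiv 1\pmod 3$ and $n=3$ is too small or gives $M=2$ with $d\le 3$ needing checking. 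For $n=3$ we would need $M=16/8=2$ and covering radius $2$ with $d\ge3$, but the repetition code $(3,2,3)$ is perfect with $R=1$, so this is excluded.

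For $n=6$ we get $M=2^7/32=4$, giving $(6,4,d)2$. Lemma~\ref{lem:no6442code} rules out $d=4$, leaving $(6,4,3)2$, which is realized by Example~\ref{ex:len6}.

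The main obstacle is the number theory, and the hardest step is the $b=1$ divisibility reduction to Lemma~\ref{lem:dio}. I must carefully justify that the odd part of $n^3-2n^2+n-4$ is $1$ using the gcd computation. For $b=2$ I will quote Nagell's theorem on $x^2+7=2^m$. I also need to check small $n$ against the constraint $d\le n$, with $R=2$, when discarding spurious solutions.
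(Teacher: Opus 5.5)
Your proposal is correct and follows the paper's proof essentially step for step. You use the same split into $b=2$ (reducing $n^2-n+2=2^k$ to the Ramanujan--Nagell equation, keeping $n\equiv 0\pmod 3$, and using Lemma~\ref{lem:no6442code} to exclude $(6,4,4)2$) and $b=1$ (the bound $\gcd(n^3-2n^2+n-4,\,n-2)\mid 2$ forcing the cubic to be a power of $2$, then Lemma~\ref{lem:dio}); your exclusion of $n=3$, either via $n\ge 2R$ from $R\le n-\ceil{d/2}$ or via the only $(3,2,3)$ code having covering radius $1$, is if anything more precise than the paper's remark that this code is perfect.
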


\begin{proof}
Consider a non-perfect $(n,M,d)2$ nearly-perfect covering code. By Lemma~\ref{lemma:quasiperfect}, $d\in\set{3,4}$. By Corollary \ref{coro:defwith2Rreplaced}, we have,
    \begin{equation}\label{eq:proofR2}
         M \parenv*{1+n+\binom{n}{2}-\binom{n}{2}\frac{b}{n+b-3}} = 2^n.
    \end{equation}
where $b\in\set{1,2}$ since we are interested in non-perfect codes. We separate the analysis into two cases according to the value of $b$.

\textbf{Case 1:} $b=2$. Equation~\eqref{eq:proofR2} becomes
\[
M\parenv*{1+\binom{n}{2}}=2^n.
\]
Since both $M$ and $1+\binom{n}{2}$ are integers, they both have to be a power of $2$. We therefore have to solve
\[
n^2-n+2 = 2^z,
\]
for some non-negative integer $z$. Viewing this as a quadratic equation in $n$, the discriminant $(-1)^2-4(2-2^z)=2^{z+2}-7$ must be a square. Denote $x=z+2$, and hence, we need to solve the Diophantine equation
\[
2^x-7=y^2,
\]
for the unknowns $x$ and $y$. This is the Ramanujan-Nagell equation, whose only solutions are known to be $(x,y)=(3,\pm 1)$,$(4,\pm 3)$, $(5,\pm 5)$, $(7,\pm 11)$, $(15,\pm 181)$ (see~\cite{Nag60}). These solutions yield the values $n=0,1,2,3,6,91$. Recall that $b=2$, which implies $n\equiv 0\pmod 3$, so that the only possible values are $3$ and $6$. 
    
If $n=3$, \eqref{eq:proofR2} implies $M=2$. The minimum distance has to be $d=3$, and these are the parameters of the binary repetition code of length $3$, which is perfect, and thus not a solution. If $n=6$, then $M=4$, in which case, by Lemma~\ref{lemma:quasiperfect}, we have two options: either $d=3$ or $d=4$. For the first option, a $(6,4,3)2$ nearly-perfect covering code indeed exists, as seen in Example~\ref{ex:len6}. For the second option, a $(6,4,4)2$ code does not exist by Lemma~\ref{lem:no6442code}.

\textbf{Case 2:} $b=1$. In this case, we arrange~\eqref{eq:proofR2} and obtain
\begin{equation}
\label{eq:proofR2b1}
\frac{M(n^3-2n^2+n-4)}{n-2}=2^{n+1}.
\end{equation}
The right-hand side is a power of $2$. On the left-hand side, we observe that
\[
(n^3-2n^2+n-4) - (n^2+1)(n-2) = -2,
\]
so
\[
\gcd(n^3-2n^2+n-4,n-2) \in \set{1,2}.
\]
Thus, whatever prime factors different from $2$ (if any) exist in $n-2$, must be canceled by $M$ in~\eqref{eq:proofR2b1}, and $n^3-2n^2+n-4$ must be a power of $2$. Hence, we need to solve the Diophantine equation
\[
n^3-2n^2+n-4 = 2^z,
\]
for some non-negative integer $z$. By Lemma~\ref{lem:dio}, the solutions are $(n,z)=(3,3)$ and $(n,z)=(4,5)$. We now recall that $b=1$, and therefore $n\equiv 1\pmod{3}$. Thus, the only solution is $n=4$.

If $n=4$, by~\eqref{eq:proofR2} we have $M=2$. Since the minimum distance is $d=3$ or $d=4$ (by Lemma~\ref{lemma:quasiperfect}), the potential nearly-perfect covering codes have parameters $(4,2,3)2$ and $(4,2,4)2$. Both of these codes exist, as shown in Example~\ref{ex:repeat}.
\end{proof}

\subsection{Code Equivalence}

Indeed, the codes of Example~\ref{ex:repeat} and Example~\ref{ex:len6} cover all possible parameters for a nearly-perfect covering code of radius $R=2$ given in Theorem~\ref{thm:radius2}. We now show all such codes must be equivalent to one of the codes in the examples. To make our statements precise, we say two codes are equivalent if one may be obtained from the other by a permutation of the coordinates and a translation by some fixed vector.

\begin{theorem}
\label{th:repuniq}
If a non-perfect $(n,2,d)R$ code, $C$, is nearly perfect, it is equivalent to one of the codes from Example~\ref{ex:repeat}.
\end{theorem}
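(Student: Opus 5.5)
The plan is to compute the covering radius of an arbitrary two-word code directly and then combine it with the quasi-perfectness of Lemma~\ref{lemma:quasiperfect}. Nothing else from the equality case of the refined bound should be needed.

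First, I normalize $C$ using the equivalences allowed (translation and coordinate permutation). Translating by one codeword, I may assume $0^n\in C$, so $C=\set{0^n,c}$ with $\wt(c)=d(0^n,c)=d$, since $d$ is the only distance in a two-word code. Permuting coordinates, I may take $c=1^d0^{n-d}$.

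Second, I compute $R$ exactly. Write $x\in\F_2^n$ as having weight $w_1$ on the first $d$ coordinates and $w_2$ on the last $n-d$. Then
\[
d(x,0^n)=w_1+w_2, \qquad d(x,c)=d-w_1+w_2.
\]
Hence $d(x,C)=w_2+\min\set{w_1,d-w_1}$. This is maximized at $w_2=n-d$ and $w_1=\floor{d/2}$, giving
\[
R=n-d+\floor{d/2}=n-\ceil{d/2}.
\]
This is the same quantity as in the footnote to Lemma~\ref{lemma:muineq}, but here it is attained with equality.

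Third, I invoke Lemma~\ref{lemma:quasiperfect}, which gives $d\in\set{2R-1,2R}$. In either case $\ceil{d/2}=R$, so $R=n-R$ and therefore $n=2R$ is even.
\begin{itemize}
\item If $d=2R=n$, then $c=1^{2R}$, so $C$ is the repetition code $\set{0^{2m},1^{2m}}$ with $m=R$.
\item If $d=2R-1=n-1$, then $c=1^{2R-1}0$, so $C$ is the code $C'$ of Example~\ref{ex:repeat}.
\end{itemize}
These codes are indeed nearly-perfect (and non-perfect, since $n$ is even), as verified in Example~\ref{ex:repeat}, so the classification is consistent.

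There is no serious obstacle. The only step requiring care is the exact evaluation of the covering radius, specifically that the maximum of $w_2+\min\set{w_1,d-w_1}$ is attained at $w_2=n-d$, $w_1=\floor{d/2}$. This is immediate because the two terms are independent.
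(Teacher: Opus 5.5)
Your proposal is correct and follows essentially the same route as the paper. It translates to $C=\set{0,c}$, uses $R=n-\ceil{d/2}$ with $d=\wt(c)$, and then applies Lemma~\ref{lemma:quasiperfect} to force $\ceil{d/2}=R$ and hence $n=2R$; your explicit covering-radius computation via the split weights $w_1,w_2$ fills in the step the paper leaves as ``easy to verify.''
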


\begin{proof}
Translate the code $C$ so that it contains the all-zero vector, i.e., it becomes $\set{0,c}$ for some $c\in\F_2^n$. It is easy to verify that the covering radius of the code is exactly
\begin{equation}
\label{eq:rwtc}
R=n-\ceil*{\frac{\wt(c)}{2}}.
\end{equation}
The minimum distance of the code is therefore $d=\wt(c)$. By Lemma~\ref{lemma:quasiperfect}, the code is quasi-perfect and so \[\wt(c)=d\in\set{2R-1,2R}.\]
This implies $\ceil{\frac{\wt(c)}{2}}=R$. Plugging this in~\eqref{eq:rwtc} we obtain $n=2R$. The code is therefore either $(2R,2,2R-1)R$ or $(2R,2,2R)R$, and after a proper permutation, must be equal to one of the codes of Example~\ref{ex:repeat}, as claimed.
\end{proof}

The next theorem examines remaining possible parameters:

\begin{theorem}
If $C$ is a $(6,4,3)2$ nearly-perfect code, then it is equivalent to one of the codes from Example~\ref{ex:len6}.
\end{theorem}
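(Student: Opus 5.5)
We classify $(6,4,3)2$ codes directly, using translation and coordinate permutation. Afterwards we only need to check which of the resulting codes have covering radius $2$. This fits the paper's notion of equivalence.

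\textbf{Normalizing the first pair.} Translate so that $0\in C$, and pick a second codeword $c_1$ with $d(0,c_1)=3$; such a pair exists since $d=3$. Permute coordinates so that $c_1=000111$.

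\textbf{Constraints on the last two codewords.} The remaining codewords $c_2,c_3$ must satisfy
\[
\wt(c_i)\geq 3,\qquad d(c_i,000111)\geq 3,\qquad d(c_2,c_3)\geq 3.
\]
Write each vector as a pair of halves (first three coordinates, last three coordinates), as in Figure~\ref{fig:len6}. Let $(p,q)$ be the weights of the two halves of a vector $v$. Then
\[
\wt(v)=p+q,\qquad d(v,c_1)=p+3-q.
\]
Both must be at least $3$, so $p+q\geq3$ and $p\geq q$. This gives the allowed pairs
\[
(p,q)\in\set*{(2,1),(3,0),(3,1),(2,2),(3,2),(3,3)}.
\]

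\textbf{Using the nearly-perfect structure.} To avoid brute force we use the structure from Section~\ref{sec:properties}. Here $n=6$, $R=2$, $b=2$ and $a=3$.

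By Lemma~\ref{lemma:epsilonequality}, every deep hole $z$ has $\abs{B_3(z)\cap C}=3$. By Lemma~\ref{lemma:muequalitycond} with $d$ odd, every over-covered vector is covered by exactly two codewords. Two codewords at distance $3$ always over-cover vectors at distances $1$ and $2$ from them. Two codewords at distance $4$ over-cover vectors at distance $2$ from both, and this is allowed by the second bullet of Lemma~\ref{lemma:muequalitycond}. Finally, since $M=4$ and $\sum_{i\le 2}\binom{6}{i}=22$, the total over-covering is
\[
g(\F_2^n)=4\cdot 22-64=24.
\]

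\textbf{Pinning down the distance profile.} Next we count pairwise distances. A pair at distance $3$ contributes
\[
2\binom{3}{1}\binom{3}{2}=18
\]
over-covered vectors, each with $g=1$. A pair at distance $4$ contributes $\binom{4}{2}=6$. A pair at distance $5$ or $6$ contributes $0$. Because every over-covered vector is covered by exactly two codewords, the over-covered sets of different pairs are disjoint. Hence $g(\F_2^n)$ is the sum over the six pairs, and this sum must equal $24$. The only decomposition is $18+6$: exactly one pair at distance $3$, exactly one pair at distance $4$, and the other four pairs at distance $5$ or $6$.

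So in our normalization $\set{0,c_1}$ is the unique distance-$3$ pair. The list of allowed $(p,q)$ above now shrinks, since $c_2,c_3$ must have distances in $\set{4,5,6}$ to both $0$ and $c_1$, with exactly one $4$ among all pairs. Distance at least $5$ to $0$ means $p+q\ge5$, so $(p,q)\in\set{(3,2),(3,3)}$ (or $(2,3)$, which is excluded by $p\ge q$). Distance at least $5$ to $c_1$ means $p-q\ge2$, so $(p,q)\in\set{(3,0),(3,1)}$. Distance $4$ arises from $(2,2),(3,1)$ to $0$, or from $(3,2),(2,1)$ to $c_1$.

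\textbf{Finishing the case analysis.} We then enumerate the few remaining configurations, using the symmetry $S_3\times S_3$ and the swap of halves combined with a translation that exchanges $0$ and $c_1$. I expect this to yield
\[
\set{c_2,c_3}=\set{111000,111111}\quad\text{or}\quad\set{111001,111110}
\]
up to equivalence, which are the two codes of Example~\ref{ex:len6}. Each candidate is then checked for covering radius $2$.

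\textbf{Main obstacle.} The hardest step is the last enumeration: it must be exhaustive, and we must verify that no other configuration survives the covering-radius check. If the counting argument leaves more cases than expected, the fallback is the deep-hole condition $\abs{B_3(z)\cap C}=3$. We would apply it to explicit vectors such as $z=100100$ to eliminate the extra cases.
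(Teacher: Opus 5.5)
Your counting step is wrong, and the error invalidates the distance profile on which the rest of your plan rests. Two radius-$2$ balls whose centres are at distance $3$ meet in $6$ vectors, not $18$. In your own $(p,q)$ notation, a vector within distance $2$ of both $0$ and $000111$ satisfies $p+q\leq 2$ and $p+3-q\leq 2$. This forces $p=0$ and $q\in\set{1,2}$, which gives $\binom{3}{1}+\binom{3}{2}=6$ vectors.

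So pairs at distance $3$ and pairs at distance $4$ each contribute exactly $6$, and pairs at distance $5$ or $6$ contribute $0$. Then $24=6k$ forces exactly four pairs at distance $3$ or $4$, and two pairs at distance $5$ or $6$. The profile you derived (one pair at distance $3$, one at distance $4$, four at distance $\geq 5$) cannot occur. Any three of the six pairs of a $4$-element code contain two pairs $\set{w,w'},\set{w,w''}$ sharing a codeword. In $\F_2^6$, $d(w,w'),d(w,w'')\geq 5$ means $w'$ and $w''$ each agree with $\overline{w}$ in at least five coordinates, so $d(w',w'')\leq 2<d$. Consistently, both codes of Example~\ref{ex:len6} violate your profile; the first has four pairs at distance $3$ and two at distance $6$. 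Your planned enumeration would therefore return no codes at all, rather than the two you expect.

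The repair is the paper's argument. With the correct count, the same observation shows the two far pairs cannot share a codeword. So they form a perfect matching, without loss of generality $\set{0,c}$ and $\set{c',c''}$, and the four cross pairs are at distance $3$ or $4$. Splitting on $\wt(c)\in\set{5,6}$ then finishes in a few lines:
\begin{itemize}
\item $\wt(c)=6$ forces $\wt(c')=\wt(c'')=3$ and $d(c',c'')=6$, giving the first code.
\item $\wt(c)=5$ forces $\set{\wt(c'),\wt(c'')}=\set{3,4}$; the positions of the ones are then determined, giving the second code.
\end{itemize}
Even apart from the miscount, your final step is only announced (``I expect this to yield'') and never carried out. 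The classification itself, which is the content of the theorem, is therefore missing from the proposal.
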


\begin{proof}
After translation, we may assume without loss of generality, that $C$ contains the all-zero codeword, i.e., $C=\set{0,c,c',c''}$. A ball of radius $R=2$ in $\F_2^6$ has size $1+\binom{6}{1}+\binom{6}{2}=22$. The over-covering of the space is
\[
g(\F_2^6) = 4\cdot 22 - 2^6=24.
\]
By Lemma~\ref{lemma:muequalitycond}, an over-covered vector is covered by exactly two codewords. Thus, the number of over-covered vectors is exactly $24$. A simple calculation shows that two balls of radius $2$ whose centers are at distance $3$ or $4$, have an intersection of size $6$. If, however, their centers are at distance $5$ or $6$, then they are disjoint.

The four codewords give rise to six unordered pairs of codewords. By the previous calculation, exactly four out of the six pairs must intersect in six vectors, giving rise to the over-covering of $24$ vectors.

The remaining two codeword pairs must therefore describe codewords at distance $5$ or $6$ apart. We first contend that these two pairs cannot share a codeword, i.e., cannot be of the form $\set{w,w'},\set{w,w''}$ with $w,w',w''\in C$. If that were the case, and because the length of the code is $n=6$, then $d(w',w'')\leq 2$, contradicting the minimum distance of the code. Thus, the two pairs contain all  four codewords, and are, without loss of generality, $\set{0,c}$ and $\set{c',c''}$. We summarize our findings so far by:
\begin{align*}
d(0,c),d(c',c'')&\in\set*{5,6}, \\
d(0,c'),d(0,c''),d(c,c'),d(c,c'')&\in\set*{3,4}.
\end{align*}
We now distinguish between two cases depending on $\wt(c)=d(0,c)$.

\textbf{Case 1:} $\wt(c)=6$. Thus $c=111111$. Since $\wt(c')=d(0,c')\in\set{3,4}$, but also $d(c,c')\in\set{3,4}$, the only possible option is having $\wt(c')=3$. A similar argument gives $\wt(c'')=3$. Finally, since $d(c',c'')\in\set{5,6}$, and since we must have an even distance between two equal-weight vectors, then $d(c',c'')=6$ and the two codewords are binary complements of each other. After an appropriate permutation, $c'=111000$ and $c''=000111$, giving the code $C$ from Example~\ref{ex:len6}.

\textbf{Case 2:} $\wt(c)=5$. We know that $\wt(c'),\wt(c'')\in\set{3,4}$. We cannot have $\wt(c')=\wt(c'')=3$, because then we would need to have $d(c',c'')=6$ (i.e., they are binary complements of each other) and one of $c'$ or $c''$ will have distance at most $2$ from $c$, contradicting the minimum distance of the code. We also cannot have $\wt(c')=\wt(c'')=4$ because then $d(c',c'')\leq 4$. Thus, without loss of generality, $\wt(c')=3$ and $\wt(c'')=4$. Necessarily $d(c',c'')=5$ and they have exactly one common position containing a $1$.

Without loss of generality, $c=111110$. To avoid contradicting the minimum distance of the code, $c'$ must have a $1$ in the last coordinate. By permuting only the first five positions, we can assume $c'=000111$. This leaves a single option for $c''$ to have distance at least $3$ from $c$, which is $c''=111001$. The code is then $C'$ from Example~\ref{ex:len6}.
\end{proof}

\section{Higher Radii $R\geq 3$}
\label{sec:R3ormore}

In this section we show how to extend the number-theoretic method used in Theorem~\ref{thm:radius2} for $R=2$, to the case of $R\geq 3$. The main obstacle is that the elliptic curve equations in Theorem~\ref{thm:radius2} become higher degree equations and are no longer elliptic curves. These, however, may be solved through the method of Thue-Mahler equations (e.g., see~\cite[Chapter VIII]{Sma98}). This algorithmic way may be applied to any fixed $R\geq 3$. In what follows, we give the proof for $R=3$, and write in a general way, leaving the specialization to the case $R=3$ to the very end.

We first recall the definition of a Thue-Mahler equation. We say $F(x,y)$ is a homogeneous integer polynomial of degree $d$ in the unknowns $x$ and $y$ if,
\[
F(x,y) = \sum_{i=0}^d a_i x^i y^{d-i},
\]
with $a_i\in \Z$, not all of them zero. A Thue-Mahler Diophantine equation is an equation of the following form:
\[
F(x,y) = C p_1^{z_1}p_2^{z_2}\dots p_t^{z_t},
\]
where $F(x,y)$ is a homogeneous integer polynomial, $C\in\Z$ is some constant, and $p_1,\dots,p_t$ are distinct primes. The unknowns we solve for are the integers $x$, $y$, as well as $z_1,\dots,z_t$. When $d\geq 3$ and $F(x,y)$ is irreducible over $\Q$, the number of solutions is finite~\cite{Thu09,Mah33}, and those may be found algorithmically. For a full description, see~\cite[Chapter VIII]{Sma98}, and the references therein.

We also need the definition of a falling factorial. We denote by $(x)_\ell$, $\ell\in\Z$, $\ell\geq 0$, the expression
\[
(x)_\ell \eqdef x(x-1)(x-2)\dots(x-\ell+1).
\]
In particular $(x)_0=1$.

\begin{theorem}
\label{th:codesR3}
When the covering radius is $R=3$, non-perfect nearly-perfect covering codes exist only with parameters $(6,2,5)3$, or $(6,2,6)3$.
\end{theorem}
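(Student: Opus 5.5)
The plan is to follow the proof of Theorem~\ref{thm:radius2}: reduce to a finite list of Diophantine equations and replace the elliptic curves by Thue--Mahler equations. Let $C$ be a non-perfect $(n,M,d)3$ nearly-perfect covering code. By Lemma~\ref{lemma:quasiperfect}, $d\in\set{5,6}$. By Corollary~\ref{coro:defwith2Rreplaced},
\[
M\parenv*{1+n+\binom{n}{2}+\binom{n}{3}-\binom{n}{3}\frac{b}{n+b-5}}=2^n,
\qquad b\in\set{1,2,3},
\]
where $b$ is determined by $n+1+b\equiv 0 \pmod 4$. I would treat the three values of $b$ separately.

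Since $6\sum_{i=0}^{3}\binom{n}{i}=n^3+5n+6$, clearing denominators gives
\[
M\cdot P_b(n)=6\,(n+b-5)\,2^n,
\qquad
P_b(n)\eqdef (n^3+5n+6)(n+b-5)-b\,(n)_3,
\]
where $P_b$ is an integer quartic in $n$. Any common prime factor of $P_b(n)$ and $n+b-5$ divides the constant $P_b(5-b)$, which is computable for each $b$.

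Since $M$ is a positive integer, every prime dividing $P_b(n)$ must divide $6(n+b-5)$. Any odd prime $p\neq 3$ dividing $P_b(n)$ therefore divides $n+b-5$, hence divides $P_b(5-b)$, and its exponent in $P_b(n)$ is bounded in terms of the finitely many constants involved. Consequently
\[
P_b(n)=D\cdot 2^{z_1}3^{z_2}
\]
for some $D$ in an explicit finite set of divisors built from the primes of $P_b(5-b)$. Homogenizing,
\[
F_b(x,y)\eqdef y^4P_b(x/y),
\]
each case becomes a Thue--Mahler equation $F_b(x,y)=D\,2^{z_1}3^{z_2}$, whose solutions with $y=\pm1$ we want. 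I would first check whether $P_b$ is irreducible over $\Q$. If it factors, say into a linear or quadratic factor and a cofactor, each factor must itself be an $S$-unit times a bounded constant, and we treat the factors individually. This gives a lower-degree $S$-unit equation or a Ramanujan--Nagell type equation, like the one in Case~1 of Theorem~\ref{thm:radius2}. If $P_b$ is irreducible of degree $4\geq 3$, the solution set is finite by~\cite{Thu09,Mah33}, and I would compute it with the algorithm of~\cite[Chapter VIII]{Sma98} as implemented in SageMath/Magma.

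From the resulting finite list of $n$, I would discard those failing the congruence $n\equiv -1-b\pmod 4$, those for which $M=6(n+b-5)2^n/P_b(n)$ is not an integer, and those for which the resulting parameters are perfect, such as $(7,2,7)3$ and $(23,4096,7)3$. I expect the survivor to be only $n=6$, $b=1$, $M=2$, giving the $(6,2,5)3$ and $(6,2,6)3$ codes of Example~\ref{ex:repeat}. If other small $n$ survive, I would rule them out by hand, in the spirit of Lemma~\ref{lem:no6442code}. For instance, if $M=2$, Theorem~\ref{th:repuniq} forces $n=2R=6$. For small $M>2$, I would use Lemma~\ref{lemma:muequalitycond} together with the overlap counts of the radius-$3$ balls to derive a contradiction.

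The main obstacle is the Thue--Mahler step. It requires pinning down exactly the finite set of constants $D$ (the gcd analysis, including the exponents of $2$ and $3$ coming from the factor $6$), establishing irreducibility or handling a factorization, and then actually running the computation, which can be heavy. If direct Thue--Mahler solving is too expensive, I would first try the $2$-adic and $3$-adic valuations of $P_b(n)$ for large $n$ to bound $z_1$ or $z_2$, reducing to Thue equations with fixed right-hand side.
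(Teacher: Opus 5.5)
Your plan is essentially the paper's proof: your $P_b$ coincides with the paper's $F_b$, and the gcd step confines its prime factors to $\{2,3\}$. The resulting Thue--Mahler equations, filtered by $n\equiv -1-b\pmod 4$, leave only $n=6$ for $b=1$ and a spurious $n=5$, $M=2$ for $b=2$, which your appeal to Theorem~\ref{th:repuniq} removes.

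The one step that fails as written is $b=3$. There $P_3(5-b)=P_3(2)=0$, so your gcd bound is vacuous, and in fact $P_3(n)=(n-2)(n^3-3n^2+8n+6)$. The factor $n-2$ is not forced to be an $S$-unit times a bounded constant; it must be cancelled against the $n-2$ on the right-hand side. This leaves the cubic equation $n^3-3n^2+8n+6=2^{z_1}3^{z_2}$ with $z_2\leq 1$, which is exactly how the paper treats the case $b=R$.
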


\begin{proof}
Assume an $(n,M,d)R$ nearly-perfect covering code exists. We rewrite Corollary~\ref{coro:defwith2Rreplaced} and obtain
\begin{equation}
\label{eq:geneq}
\frac{M\parenv*{(n-(2R-b-1))\sum_{i=0}^R (n)_i (R)_{R-i}-b\cdot(n)_R}}{R!(n-(2R-b-1))}=2^n.
\end{equation}
Since $b=0$ corresponds to perfect codes, we only need to check all $1\leq b\leq R$. Due to some cancellation, we leave the case of $b=R$ for later. For $1\leq b\leq R-1$, define the integer polynomial
\[
F_b(n) \eqdef (n-(2R-b-1))\sum_{i=0}^R (n)_i (R)_{R-i}-b\cdot(n)_R.
\]
Since the denominator in~\eqref{eq:geneq} is linear in $n$,
\[
F_b(n) \equiv F_b(2R-b-1) = -b\cdot(2R-b-1)_R \pmod{n-(2R-b-1)},
\]
which does not depend on $n$. Hence,
\[
\gcd(F_b(n),n-(2R-b-1)) \mid b\cdot(2R-b-1)_R.
\]
Assume the prime factorization of $R!\cdot b\cdot(2R-1-b)_R$ is $p_1^{s_1}\dots p_t^{s_t}$, where $p_1<\dots< p_t$ are distinct primes and $s_1,\dots,s_t\geq 1$, and notice that $p_1=2$ when $R\geq 2$. Since the right-hand side of~\eqref{eq:geneq} contains only $2$ in the prime factorization, the prime factors of $F_b(n)$ must be a subset of $\set{p_1,\dots,p_t}$. Thus, by~\eqref{eq:geneq},
\begin{equation}
\label{eq:thuemahler1}
F_b(n) = p_1^{m_1}p_2^{m_2}\dots p_t^{m_t},
\end{equation}
for some non-negative integers $m_i$. Moreover, $m_i\leq s_i$ for each $2\leq i\leq t$. That is, the exponent of every prime besides $2$ can be at most the exponent in the factorization of the denominator.

For the case $b=R$, the term $n-(2R-b-1)=n-(R-1)$, gets canceled from the numerator and denominator of~\eqref{eq:geneq}, and we are left with the equation
\[
\frac{M\parenv*{\sum_{i=0}^R (n)_i (R)_{R-i}-R\cdot(n)_{R-1}}}{R!}=2^n.
\]
In the same fashion as above, we define
\[
F_R(n)=\sum_{i=0}^R (n)_i (R)_{R-i}-R\cdot(n)_{R-1},
\]
and we let $p_1^{s_1}\dots p_t^{s_t}$ be the prime factorization of $R!$, where $2=p_1<\dots<p_t$ are distinct primes and $s_1,\dots,s_t\geq 1$. Thus, we need to solve
\begin{equation}
\label{eq:thuemahler1R}
F_R(n) = p_1^{m_1}p_2^{m_2}\dots p_t^{m_t},
\end{equation}
subject to $m_i\leq s_i$ for every $2\leq i\leq t$.

Let $F_b(n,m)$ be the homogeneous polynomial such that $F_b(n,1)=F_b(n)$, for all $1\leq b\leq R$. Then every integer solution $n,m_1,\dots,m_t$ to~\eqref{eq:thuemahler1}, or~\eqref{eq:thuemahler1R} when $b=R$, induces an integer solution to
\begin{equation}
\label{eq:tmneed}
F_b(n,m) = p_1^{m_1}p_2^{m_2}\dots p_t^{m_t},
\end{equation}
that is $n,m=1,m_1,\dots,m_t$. It is now a matter of checking all the integral solutions of this Thue-Mahler equation, and verifying which ones fit the original equation~\eqref{eq:geneq}\footnote{If the bounds on the exponents $m_2,\dots, m_t$ are small, it could be more efficient to solve each of the $\prod_{i=2}^{t}(s_i+1)$ Thue-Mahler equations $F_b(n,m)=C 2^{m_1}$, with the constants $C=p_2^{u_2}\dots p_t^{u_t}$ for each choice of $0\leq u_2\leq s_2, \dots, 0\leq u_t\leq s_t$.}.

At this point we specialize to the case of $R=3$. We divide the analysis according to the value of $1\leq b\leq 3$.

\textbf{Case 1:} $b=1$. We have $R!b\cdot(2R-b-1)_R=144=2^4 3^2$. We set $p_1=2$ and $p_2=3$. We need to solve the Thue-Mahler equation
\[
F_1(n,m)=n^4 -5n^3 m +8 n^2 m^2- 16n m^3 - 24m^4 = 2^{m_1}3^{m_2}.
\]

The solutions we get are\footnote{We solved the Thue-Mahler equation using magma and the Thue-Mahler solver~\cite{GheSik25}, available online at \url{https://github.com/adelagherga/ThueMahler/tree/master/Code/TMSolver}. The solver also checks irreducibility.} 
\begin{align*}
(n,m,m_1,m_2)\in\{&(\pm 82,\pm 19,9,5),(\pm 6,\pm 1,7,1), (\pm 5, \pm 1, 5, 1),(\pm 2,\mp 1, 5, 1), \\
&(\pm 1,\mp 1, 1, 1),(\pm 1,0,0,0)\}.
\end{align*}
These solutions contain only those in which $\gcd(n,m)=1$ for two reasons. First, any solution $(n,m)$ gives rise to infinitely many other $(cn,cm)$ for any integer $c$ that contains only $p_1,\dots,p_t$ as prime factors (with a proper adjustment of $m_1,\dots,m_t$). Thus, it is enough to specify those solutions with $\gcd(n,m)=1$. Second, we used $m$ artificially to get to a homogeneous polynomial, and are therefore only interested in solutions in which $m=1$.

We remove all solutions where $n< 2R-1$ (since these are obviously irrelevant), and those with $m\neq 1$. We also need to remove all solutions with $m_2>2$. Additionally, since $b=1$ we must have $n\equiv 2\pmod{4}$. After removing all irrelevant solutions we are left with a single solution, $n=6$. Checking~\eqref{eq:geneq}, this solution corresponds to either a $(6,2,5)3$ code, or a $(6,2,6)3$ code, as claimed.

\textbf{Case 2:} $b=2$. In this case we have $R!(2R-b-1)_R = 36=2^2 3^2$, so again, $p_1=2$ and $p_2=3$. We therefore need to solve
\[
F_2(n,m)=n^4 -5n^3 m + 11n^2 m^2- 13n m^3 - 18m^4 = 2^{m_1}3^{m_2}.
\]
The solutions we get are
\begin{align*}
(n,m,m_1,m_2) \in \{ &
(\pm 18, \pm 5, 5, 5),
(\pm 17, \pm 5, 4, 4),
(\pm 9, \mp 11, 6, 6),
(\pm 7, \mp 5, 13, 1), \\
&
(\pm 5, \pm 1, 6, 1),
(\pm 2, \mp 1, 2, 3),
(\pm 1, \mp 1, 2, 1),
(\pm 1, 0, 0, 0)
\}.
\end{align*}
After removing irrelevant solutions (this time we must have $n\equiv 1\pmod{4}$), the only remaining candidate is $n=5$. However, plugging this into~\eqref{eq:geneq} gives us the parameters of the $(5,2,5)2$ code which is perfect.

\textbf{Case 3:} $b=3$. In this final case, $R!=6=2\cdot 3$, and as before, $p_1=2$ and $p_2=3$. The equation we need to solve is
\[
F_3(n,m)=n^3-3n^2 m+8n m^2+6m^3 = 2^{m_1}3^{m_2}.
\]
The solutions we get are
\begin{align*}
(n,m,m_1,m_2) \in \{ &
    ( 85, -143, 13, 2 ),
    ( 59, 7, 3, 9 ),
    ( 28, -47, 1, 7 ),
    ( 11, -17, 7, 3 ),
    ( 7, -5, 6, 3 ), \\
&
    ( 6, 1, 1, 4 ),
    ( 5, 1, 5, 1 ),
    ( 4, 1, 1, 3 ),
    ( 3, 7, 10, 1 ),
    ( 3, -1, 3, 2 ), \\
&
    ( 3, -5, 2, 1 ),
    ( 2, 1, 1, 2 ),
    ( 1, 1, 2, 1 ),
    ( 1, -1, 1, 1 ),
    ( 1, 0, 0, 0 ),
    ( 0, 1, 1, 1 ), \\
&
    ( -1, 4, 0, 5 ),
    ( -1, 2, 0, 2 ),
    ( -3, 13, 2, 7 )
\}.
\end{align*}
This list contains several candidates with $m=1$ and $n\geq 2R-1$. However, since we have $b=3$, this implies $n\equiv 0\pmod{4}$, which leaves no relevant solution.
\end{proof}

Since we now know that all nearly-perfect covering codes with covering radius $R=3$ must have the same parameters as the codes in Example~\ref{ex:repeat}, we remark that Theorem~\ref{th:repuniq} immediately implies that they must be equivalent to the codes in the example.

Inspection of the proof of Theorem~\ref{th:codesR3} reveals another result:

\begin{corollary}
For any fixed $R\geq 3$, there are at most a finite number of non-perfect $(n,M,d)R$ nearly-perfect covering codes.
\end{corollary}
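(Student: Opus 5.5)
For a fixed $R\geq 3$, the plan is to show that only finitely many lengths $n$ are possible, and that each length admits only finitely many codes. The second part is immediate: once $n$ is fixed, there are only finitely many subsets of $\F_2^n$. So the whole task is to bound $n$.

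To bound $n$, I follow the reduction in the proof of Theorem~\ref{th:codesR3}, which was written for general $R$ before specializing. By Corollary~\ref{coro:defwith2Rreplaced}, a nearly-perfect covering code satisfies~\eqref{eq:geneq} for some $b$ with $1\leq b\leq R$. There are only $R$ such values of $b$, so it suffices to show that for each fixed $b$ only finitely many $n$ solve~\eqref{eq:geneq}.

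Fix $b$. The gcd argument already given shows that $F_b(n)$ equals a product of powers of a fixed finite set of primes $p_1,\dots,p_t$. This set depends only on $R$ and $b$. Homogenizing gives the Thue--Mahler equation~\eqref{eq:tmneed}, in which $F_b(n,m)$ has degree $R+1\geq 4$ when $b<R$ and degree $R\geq 3$ when $b=R$. If $F_b(n,m)$ is irreducible over $\Q$, the Thue--Mahler theorem~\cite{Thu09,Mah33} gives finitely many coprime solutions $(n,m)$. In particular only finitely many $n$ occur with $m=1$, and the bound on $n$ follows.

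The main obstacle is that irreducibility of $F_b$ is not automatic for a general $R$. I would handle this by not needing irreducibility. Factor $F_b(n)=\prod_j G_j(n)$ over $\Z$. Each factor $G_j(n)$ must itself be an $S$-unit (up to sign) for the same prime set. Because the degree is at least $3$, some factor either has degree at least $3$ or has two distinct roots, or else $F_b$ is a perfect power of a polynomial of degree at most $2$; I would check whether that last case can actually occur. I would then apply finiteness results for $S$-unit values of polynomials: Mahler's theorem for irreducible factors of degree at least $3$, and, when all factors are linear or quadratic, a Siegel-type result that a polynomial with at least two distinct roots takes $S$-unit values only finitely often. For instance, two distinct linear factors $n-\alpha$ and $n-\beta$ lead to an $S$-unit equation $u-v=\alpha-\beta$, which has finitely many solutions. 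To rule out the degenerate case, I would show that $F_b$ has no repeated roots. If the discriminant computation is not tractable for general $R$, I would instead check that $F_b(n)\sim n^{R+1}$ cannot coincide with $c(n-\alpha)^{k}$, by comparing the next coefficient with the one after it.

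Combining these, the number of admissible $n$ is finite for each $b$. Finiteness of $b$ and of codes of each fixed length then gives the corollary.
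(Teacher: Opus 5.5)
\paragraph{Overall.} Your reduction is the paper's. The finiteness input you eventually fall back on is also the paper's: a polynomial with at least two distinct complex roots takes $S$-unit values at only finitely many integers. The paper uses it in the quantitative form $[F_b(n)]_S\leq\kappa_2\abs{F_b(n)}^{1-\kappa_1}$ from \cite{GroVin13,BugEveGyo18}. Applied directly to $F_b$, this makes your irreducible/Thue--Mahler branch and the factor-by-factor case analysis unnecessary. The only degenerate case is $F_b(n)=(n-\alpha)^{\deg F_b}$. An irreducible quadratic factor already has two distinct roots, so ``a perfect power of a polynomial of degree at most $2$'' is not the right description of it.

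\paragraph{The gap.} You never prove this non-degeneracy, and it is the only real content of the argument beyond the citation. Squarefreeness of $F_b$ is more than is needed, and you give no way to establish it. Your fallback, matching the two subleading coefficients, fails for at least one $R$.

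For $b=R$ the $i=R-1$ term cancels, so $F_R(n)=(n)_R+\sum_{i=0}^{R-2}(n)_i(R)_{R-i}$. Its $n^{R-1}$ coefficient is $-\binom{R}{2}$, which forces $\alpha=\frac{R-1}{2}$. Its $n^{R-2}$ coefficient is $\frac{R(R-1)(R-2)(3R-1)}{24}+R(R-1)$. This agrees with $\binom{R}{2}\alpha^2=\frac{R(R-1)^3}{8}$ exactly when $R=23$: comparing $F_{23}$ with $(n-11)^{23}$, both coefficients equal $30613$.

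\paragraph{How the paper closes it.} The paper's argument works for every $R$:
\begin{itemize}
\item $F_b$ is monic with integer coefficients, so any rational root must be an integer.
\item For $1\leq b\leq R-1$, the $n^R$ coefficient $1-\binom{R+1}{2}$ forces $\alpha=\frac{R}{2}-\frac{1}{R+1}\notin\Z$.
\item For $b=R$ with $R$ even, $\alpha=\frac{R-1}{2}\notin\Z$.
\item For $b=R$ with $R$ odd, the constant term $R!>0$ of $F_R$ contradicts the constant term $(-\alpha)^R<0$ of $(n-\alpha)^R$.
\end{itemize}
With that step supplied your proof is complete. Your ending, that a fixed $n$ allows only finitely many subsets of $\F_2^n$, is cleaner than the paper's.
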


\begin{proof}
We use the same notation as the proof of Theorem~\ref{th:codesR3}. The first half of the proof of Theorem~\ref{th:codesR3} ends by needing to solve~\eqref{eq:thuemahler1} and~\eqref{eq:thuemahler1R}.

We now contend that for any $1\leq b\leq R$, the polynomial $F_b(n)$ has at least two distinct complex roots. Let us first examine the case $1\leq b\leq R-1$. We observe that $F_b(n)$ is monic of degree $R+1$. The coefficient of $n^R$ is $1-\binom{R+1}{2}$. If $F_b(n)$ had only one distinct root, then we would have $F_b(n)=(n-\alpha)^{R+1}$ for some $\alpha\in\C$. But then the coefficient of $n^R$ would equal $-(R+1)\alpha$, giving us
\[
\alpha=\frac{R}{2}-\frac{1}{R+1}.
\]
For $R\geq 3$ this $\alpha$ is rational, but not an integer. Write $\alpha=\frac{v}{w}$  for $v,w\in\Z$, $\gcd(v,w)=1$. Since the coefficients of $F_b(n)$ are integers, we have that $w$ must divide the coefficient of $n^{R+1}$. But the polynomial is monic, so $w=\pm 1$, and $\alpha$ is an integer, a contradiction.

In a similar manner, we contend that for the case $b=R$, $F_R(n)$ has at least two distinct complex roots. Again, $F_R(n)$ is monic, and the coefficient of $n^{R-1}$ is $-\binom{R}{2}$, while the constant term is $R!$. Like before, if there is a single distinct complex root $\alpha$ to $F_R(n)$, then
\[
\alpha=\frac{R-1}{2}.
\]
If $R$ is even, $\alpha$ is not an integer, and like above, we get a contradiction since the polynomial is monic. If $R$ is odd, the constant term of $(x-\alpha)^R$ is negative, while the constant term of $F_R(n)$ is $R!$ which is positive. In all cases we get a contradiction.

If $S$ is a set of prime numbers, for any non-zero integer $m$, define $[m]_S$ to be the largest integer dividing $m$ all of whose prime factors are in $S$. Let $S=\set{p_1,\dots,p_t}$, the primes from~\eqref{eq:thuemahler1} and~\eqref{eq:thuemahler1R}. By~\cite[Corollary 1.5]{GroVin13} or~\cite[Theorem 2.2]{BugEveGyo18}, there are constants $\kappa_1,\kappa_2>0$, such that
\[
[F_b(n)]_S \leq \kappa_2\abs*{F_b(n)}^{1-\kappa_1}.
\]
By~\eqref{eq:thuemahler1} and~\eqref{eq:thuemahler1R},
\[
[F_b(n)]_S = \abs*{F_b(n)}.
\]
It follows that
\[
\abs*{F_b(n)} \leq \kappa_2^{1/\kappa_1}.
\]
Since $F_b(n)$ is non-constant, this implies a constant upper bound to any $n$ that satisfies~\eqref{eq:thuemahler1} or~\eqref{eq:thuemahler1R}.

Since $R$ is fixed, once $n$ is found, by~\eqref{eq:newdef} there is only one possible value for $M$. Additionally, by Lemma~\ref{lemma:quasiperfect}, there are only two options for $d\in\set{2R-1,2R}$. Finally, given the parameters $(n,M,d)R$, there is a finite number of translations and permutations to produce equivalent codes.
\end{proof}

We note that the Thue-Mahler machinery used in this paper requires polynomials of degree at least $3$. This holds for any covering radius $R\geq 3$. For smaller radii we get lower-degree polynomials, e.g., for $R=2$ the case $b=2$ in the proof of Theorem~\ref{thm:radius2} is only quadratic. The reason the Thue-Mahler machinery may fail at lower degrees is that there may be infinitely many integer solutions. This is indeed the case for $R=1$, where there exist infinite families of nearly-perfect covering codes~\cite{BorEtzRot25}, or of nearly-perfect error-correcting codes with packing radius $e=1,2$.

\section{Towards a Full Characterization}
\label{sec:groupalgebra}

A convenient way to represent binary vectors is by using monomials on the variables $z_1,z_2,\dots,z_n$, where we adopt the convention $z_i^2=1$. The vector $x=(x_1,x_2,\dots, x_n)\in\F_2^n$ is represented by the monomial $z_1^{x_1}\dots z_n^{x_n}$ which we abbreviate $z^x$. Let $G$ be the multiplicative group of all possible $z^v$. The group algebra $\Q G$ over the rational numbers consists of all formal sums $\sum_{v\in\F_2^n}a_v z^v$ with rational coefficients $a_v\in\Q$. For more details, see \cite[Chapter 5]{MacSlo78}. We denote 
\[Y_i\eqdef\sum_{\substack{v\in \F_2^n\\ \wt(v)=i}}z^v,
\]
and given a code $C\subseteq\F_2^n$, by abuse of notation, we define $C\eqdef\sum_{c\in C}z^c$.

The general strategy to prove a full characterization of perfect and nearly-perfect error-correcting codes consists of deriving an equality of the form 
\[
    C\cdot \sum_{i}\beta_i Y_i=\sum_{v\in\F_2^n}z^v
\]
in the group algebra. For perfect codes, the equality in the sphere packing bound readily gives us such an expression:
\[
C\cdot \sum_{i=0}^{e}Y_i = \sum_{v\in \F_2^n}z^v.
\]

For nearly-perfect error-correcting codes with minimum distance $d=2e+1$, the following equality holds~\cite[Theorem 4.4]{GoeSno72}
\[
C\cdot \left(\sum_{i=0}^{e-1}Y_i+\frac{1}{\lfloor (n+1)/(e+1)\rfloor}(Y_e+Y_{e+1})\right)= \sum_{v\in \F_2^n}z^v.
\]
The proof relies on the following regularity result:

\begin{lemma}[{\cite[Lemma 3.1]{GoeSno72}}]
    A nearly-perfect $e$-error-correcting code of length $n$ satisfies: 
    \begin{itemize}
        \item[1)] Any vector at distance greater than $e$ from every codeword is at distance $e+1$ from exactly $\lfloor n/(e+1)\rfloor$ codewords.
        \item[2)] Any vector at distance $e$ from a given codeword is at distance $e+1$ from exactly ${\lfloor (n-e)/(e+1)\rfloor}$ other codewords.
    \end{itemize}
\end{lemma}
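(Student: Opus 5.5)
We work with a nearly-perfect $e$-error-correcting code $C$ of length $n$ and minimum distance $2e+1$, so $C$ meets the Johnson bound~\eqref{eq:jbodd} with equality. The plan mirrors how Lemma~\ref{lemma:epsilonequality} came out of the Van Wee chain. First we reprove~\eqref{eq:jbodd} by double counting, which yields local inequalities. Then equality in~\eqref{eq:jbodd} forces every one of those local inequalities to be tight, and tightness is exactly the regularity claimed. This is the method of Goethals and Snover, and we follow it.

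\textbf{Step 1 (vectors far from the code).} Let $x$ satisfy $d(x,C)>e$, i.e., $x\notin\bigcup_{c\in C}B_e(c)$. Let $N(x)$ be the number of codewords at distance exactly $e+1$ from $x$. Translate so that $x=0$. Any two such codewords $c,c'$ have weight $e+1$ and satisfy $d(c,c')\geq 2e+1$, so $\abs{\supp(c)\cap\supp(c')}\leq 0$. Hence their supports are pairwise disjoint and $N(x)\leq\floor{n/(e+1)}$.

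\textbf{Step 2 (vectors at distance $e$ from a codeword).} Let $y$ satisfy $d(y,c_0)=e$ for some $c_0\in C$, and put $y=0$. Then $\wt(c_0)=e$. Any other codeword $c$ with $d(c,y)=e+1$ has weight $e+1$ and $d(c,c_0)\geq 2e+1$. This forces $\supp(c)$ to be disjoint from $\supp(c_0)$ and from the supports of all other such codewords. Therefore these supports are disjoint $(e+1)$-subsets of the remaining $n-e$ coordinates, giving at most $\floor{(n-e)/(e+1)}$ such codewords.

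\textbf{Step 3 (the double count).} Count pairs $(v,c)$ with $c\in C$ and $d(v,c)=e+1$, where $v$ lies outside all $e$-balls, in two ways.
\begin{itemize}
\item By codeword: each $c$ has $\binom{n}{e+1}$ vectors at distance $e+1$. Each $(e+1)$-shell must have the vectors lying at distance $e$ from some other codeword removed. Using Step 2 from the other side, for each $c$ we count the pairs $(c,c')$ of such near-misses through points $y\in C_e$.
\item By vector: each outside vector $v$ contributes $N(v)$, bounded by Step 1.
\end{itemize}
Both counts involve $\abs{C}\binom{n}{e}$ (the points at distance exactly $e$) and $2^n-\abs{C}\sum_{i\le e}\binom{n}{i}$ (the outside points). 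Combining them with the two upper bounds gives exactly the inequality~\eqref{eq:jbodd}. The bookkeeping must reproduce the precise form of~\eqref{eq:jbodd}, including the factor $\frac{n+1}{e+1}-\floor{\frac{n+1}{e+1}}$ and the denominator $\ceil{\frac{n-e}{e+1}}$. I expect this to be the main obstacle.

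My first attempt will follow the Johnson-bound derivation in~\cite[Theorem 2.3.8]{PleHuf03}. It bounds, for each $c$, the number of weight-$(e+1)$ error patterns $w$ that do not lie in another codeword's $e$-ball. The bound comes from averaging over the points at distance $e$, each of which sends at most $\floor{(n-e)/(e+1)}$ patterns into other codewords. I will set up the count so that each inequality used is literally Step 1 or Step 2, summed over all outside vectors and over all vectors at distance $e$, respectively. Equality in~\eqref{eq:jbodd} then forces equality pointwise in both sums, because each term obeys a one-sided bound. That pointwise equality gives items 1) and 2).

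A further subtlety is where the value $\floor{(n-e)/(e+1)}$ versus $\ceil{(n-e)/(e+1)}$ enters. When $(e+1)\nmid(n-e)$, the Step 2 count cannot fill all coordinates. The leftover coordinates $\{i\}$ give points $y+e_i$ lying outside all $e$-balls, and these feed back into Step 1. Handling this coupling correctly is where the precise form of the bound matters, and I will check it against the known nearly-perfect parameters $n\equiv -1 \pmod{e+1}$ as a sanity test.
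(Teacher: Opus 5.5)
Your plan is correct, and it is the standard argument behind this result: Johnson's double count followed by an equality analysis. The paper gives no proof of its own; it only cites Goethals and Snover. Steps 1 and 2 are right as written.

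Step 3 closes as follows. Let $Z$ be the set of vectors at distance at least $e+1$ from $C$, and let $P$ be the number of pairs $(v,c)\in Z\times C$ with $d(v,c)=e+1$. Step 1 gives $P\le\bigl(2^n-\abs{C}\sum_{i=0}^{e}\binom{n}{i}\bigr)\lfloor\frac{n}{e+1}\rfloor$. For the lower bound, take a vector at distance $e+1$ from $c$ that is not in $Z$. By the triangle inequality and $d=2e+1$, it lies at distance exactly $e$ from exactly one other codeword. So the vectors removed from the $(e+1)$-shells are counted exactly by summing your Step 2 quantity over the $\abs{C}\binom{n}{e}$ vectors at distance $e$ from $C$. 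This gives $P\ge\abs{C}\bigl(\binom{n}{e+1}-\binom{n}{e}\lfloor\frac{n-e}{e+1}\rfloor\bigr)$.

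The obstacle you anticipate does not arise. Use the three identities $\binom{n}{e+1}=\frac{n-e}{e+1}\binom{n}{e}$, $\frac{n+1}{e+1}-\lfloor\frac{n+1}{e+1}\rfloor=\frac{n-e}{e+1}-\lfloor\frac{n-e}{e+1}\rfloor$ and $\lceil\frac{n-e}{e+1}\rceil=\lfloor\frac{n}{e+1}\rfloor$. With these, the sandwich on $P$ is literally \eqref{eq:jbodd}. Equality therefore forces every term of both sums to meet its one-sided bound, which is exactly items 1) and 2). No coupling between Steps 1 and 2 is needed: the leftover-coordinate points $y+e_i$ do lie in $Z$, but the two local bounds enter the sandwich independently.

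Two small remarks:
\begin{itemize}
\item Your reading $d=2e+1$ is necessary, not just convenient. For $d=2e+2$, no codeword can be at distance $e+1$ from a point at distance $e$ from another codeword, so item 2) would fail.
\item Your proposed sanity check is misplaced. When $n\equiv -1\pmod{e+1}$ the correction term vanishes and \eqref{eq:jbodd} reduces to the sphere-packing bound, which is the perfect case with $Z=\emptyset$. The known non-perfect nearly-perfect codes (shortened Hamming, punctured Preparata) have $n\equiv 0\pmod{e+1}$.
\end{itemize}
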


A natural question arises: Can this same strategy be applied for the case of nearly-perfect covering codes? Again, by abuse of notation, we let $C_R\eqdef\sum_{v\in C_R}z^v$. Then we have 
\begin{equation}\label{eq:groupalgebra1}
    C\cdot \sum_{i=0}^{R-1} Y_i + C_R = \sum_{v\in \F_2^n}z^v\\
\end{equation}
because the code is quasi-perfect (see Lemma~\ref{lemma:quasiperfect}). In order to derive a second equality, we can study the value of $C\cdot \sum_{i=0}^{R+1}Y_i$. If $d=2R$ is even,
\begin{equation}\label{eq:groupalgebra2}
    C\cdot \sum_{i=0}^{R+1}Y_i = \sum_{v\in\F_2^n}z^v+C_R\cdot\left(a-1\right)+\sum_{v\in C_{R-1}}(|B_{R+1}(v)\cap C|-1)z^v.
\end{equation}
The equation holds because balls of radius $R+1$ centered at codewords can only intersect at vectors in $C_R$ or $C_{R-1}$ if $d$ is even, so vectors at distance $\leq R-2$ are counted once in the left-hand side sum, vectors in $C_R$ are counted $a$ times by Lemma \ref{lemma:epsilonequality}, and a vector in $C_{R-1}$ is counted $|B_{R+1}(v)\cap C|$ times. So we see that in order to get a useful expression, we would need a regularity result showing that $|B_{R+1}(v)\cap C|$ is constant for all $v\in C_{R-1}$. When $R=1$ this has been proven to be true in \cite[Theorem 6]{BorEtzRot25}. In that case, by combining \eqref{eq:groupalgebra1} and \eqref{eq:groupalgebra2} we retrieve the same equation as in \cite[Section IV.B]{BorEtzRot25}:
\[C\cdot (n/2+Y_1+Y_2)=(n/2+1)\sum_{v\in \F_2^n}z^v.\] 

We now show generalizations of the known regularity results for nearly-perfect covering codes in the case $R=1$, as stepping stones for a future group-algebra equality. The following is a generalization of~\cite[Lemma 5]{BorEtzRot25}:

\begin{theorem}\label{thm:lowerboundR+1}
Let $C$ be a non-perfect $(n,M,d)R$ nearly-perfect covering code satisfying one of the following:
\begin{itemize}
    \item $R=1$, or 
    \item $d$ is even and  $(n-R+1)(a-1)\not\equiv 0 \pmod {R+2}$.
\end{itemize}
 Then for any $x\in C_{R-1}$, we have
    \[
    \abs*{B_{R+1}(x)\cap C}\geq 2.
    \]
\end{theorem}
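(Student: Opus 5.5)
The plan is to argue by contradiction using a double-counting argument around the deep holes adjacent to $x$. For the case $R=1$ the statement is exactly \cite[Lemma 5]{BorEtzRot25}, so I would cite it. For safety I would also note that when $d=1$ the claim is immediate, since a codeword at distance $1$ from $x$ lies in $B_{R+1}(x)$. The main work is the case of even $d$, where Lemma~\ref{lemma:quasiperfect} gives $d=2R$.

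\textbf{Setup.} Fix $x\in C_{R-1}$. Since the code is quasi-perfect (Lemma~\ref{lemma:quasiperfect}), there is a unique codeword $c$ with $d(x,c)=R-1$. Suppose, to the contrary, that $B_{R+1}(x)\cap C=\set{c}$. Let $A\eqdef\supp(x-c)$, so $\abs{A}=R-1$, and for each $i\notin A$ put $y_i\eqdef x+e_i$. There are $n-R+1$ such indices. Each $y_i$ satisfies $d(y_i,c)=R$. Any other codeword at distance at most $R-1$ from $y_i$ would be within $R$ of $x$, which is excluded. Hence $y_i\in C_R$.

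\textbf{Counting from the deep holes.} By Lemma~\ref{lemma:epsilonequality}, $\abs{B_{R+1}(y_i)\cap C}=a$. One of these codewords is $c$; the remaining $a-1$ codewords $c''$ satisfy $d(x,c'')\leq R+2$. By our assumption they satisfy $d(x,c'')=R+2$ exactly, and hence $d(y_i,c'')=R+1$, which forces $i\in\supp(c''-x)$. So each $y_i$ contributes exactly $a-1$ pairs $(i,c'')$, where $c''\in C$, $d(x,c'')=R+2$, and $i\in\supp(c''-x)\setminus A$.

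\textbf{Counting from the codewords.} Conversely, fix a codeword $c''$ with $d(x,c'')=R+2$ and let $t\eqdef\abs{A\cap\supp(c''-x)}$. Then
\[
2R=d\leq d(c,c'')=(R-1)+(R+2)-2t=2R+1-2t,
\]
so $t=0$. Thus every index in $\supp(c''-x)$ lies outside $A$. For every such index $i$, $d(y_i,c'')=R+1$, so $c''$ is counted exactly $R+2$ times. Letting $N$ be the number of codewords at distance $R+2$ from $x$, double counting gives
\[
(n-R+1)(a-1)=(R+2)N\equiv 0\pmod{R+2}.
\]
This contradicts the hypothesis, so $\abs{B_{R+1}(x)\cap C}\geq 2$.

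\textbf{Main obstacle.} The delicate step is verifying that each $y_i$ really is a deep hole, and that the $a-1$ extra codewords in $B_{R+1}(y_i)$ are exactly those at distance $R+2$ from $x$ whose difference with $x$ has $i$ in its support. This bookkeeping uses quasi-perfectness, the contradiction hypothesis, and the evenness of $d$, which forces $t=0$. For odd $d$ one would get $t\leq 1$, the multiplicity per codeword would no longer be constant, and the argument breaks down. This explains the restriction to $R=1$ or even $d$ in the statement.
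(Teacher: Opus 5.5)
Your proof is correct, and it reaches the contradiction by a more direct route than the paper's. The paper translates $x$ to $0$ and uses the other half of Lemma~\ref{lemma:epsilonequality}, namely $g(B_1(e_i))=b$ for the deep holes $e_i$, $i\in J$. Together with Lemma~\ref{lemma:muequalitycond} (for even $d$ only vectors of $C_R$ are over-covered), this gives $2g(\cS_2')=b(n-R+1)$. The paper then shows that every codeword of weight $R+2$ is supported in $J$, regards these supports as a $(2,R+2,n-R+1)$ covering design, and plays the Sch\"onheim bound against the exact value $f(\cS_2')=\binom{n-R+1}{2}+b(n-R+1)/2$.

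You instead use the $a$-regularity $\abs{B_{R+1}(y_i)\cap C}=a$ and double count incidences between coordinates and codewords at distance $R+2$ from $x$, letting $d=2R$ force $t=0$ directly. The two arguments end at the same identity. Since $n-R+b=(R+1)(a-1)$, the paper's exact count reads $\abs{\cS'_{R+2}\cap C}\binom{R+2}{2}=\frac{(n-R+1)(R+1)(a-1)}{2}$, which is precisely your $(R+2)N=(n-R+1)(a-1)$. So the Sch\"onheim step is really just integrality in disguise. Your version makes this transparent and does without Lemma~\ref{lemma:muequalitycond} and the covering-design bound.

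Two smaller points. First, for $R=1$ you do not need to cite \cite[Lemma 5]{BorEtzRot25}. Your count applies verbatim with $A=\emptyset$, where $t=0$ holds for either parity of $d$, giving $3N=n(a-1)$. With $n=2^m$ (the same input from \cite{BorEtzRot25} that the paper uses), $n(a-1)=2^{2m-1}\not\equiv 0\pmod 3$. Second, your remark that $d=1$ makes the claim immediate is unjustified. A minimum distance of $1$ only says that some pair of codewords is adjacent, not that the given $x$ has a codeword at distance $1$. The remark is harmless only because nothing depends on it, so drop it.
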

\begin{proof}
By translating the code, we can assume $x=0$. Let $c\in C$ be the unique codeword covering $x=0$ with $\wt(c)=R-1$ (uniqueness follows from the fact that the code is quasi-perfect by Lemma~\ref{lemma:quasiperfect}). Without loss of generality, $\supp(c)=\set{n-R+2,\dots,n}$, and define $J\eqdef \set{1,2,\dots,n-R+1}$. We denote, for all $0\leq i\leq n$,
\begin{align*}
\cS_i &\eqdef \set*{z\in\F_2^n : \wt(z) = i}, &
\cS'_i &\eqdef \set*{z\in\cS_i : \supp(z) \subseteq J}.
\end{align*}
Notice that if $R=1$ then $c=x=0$, $J=\set{1,2,\dots,n}$ and $\cS_i=\cS'_i$ for every $0\leq i\leq n$. Consider any standard unit vector $e_i$, $i\in J$. Note that $d(e_i,c)=R$. If $d(e_i,C)\leq R-1$, then there exists some codeword $c'\in C$ such that $d(e_i,c')\leq R-1$. But then obviously $d(0,c')\leq R$, and the claim is proved since $c,c'\in B_{R+1}(0)$.

Let us therefore continue with the case $e_i\in C_R$ for all $i\in J$. Thus, by Lemma~\ref{lemma:epsilonequality}, $g(B_1(e_i))=b>0$ for all $i\in J$, where the inequality follows from the fact that the code is non-perfect. 

Suppose by contradiction that $\abs{B_{R+1}(0)\cap C}\leq 1$, which would mean that the only codeword at a distance $\leq R+1$ from $0$ would be $c$. This implies that no vector in $\cS_1$ can be over-covered. Thus, the over-covering $g(B_1(e_i))$, $i\in J$, can only come from over-covering of elements in $\cS_2$. Moreover, only vectors in $\cS_2'$ can be over-covered. If $R=1$ this is trivially true since $\cS_2=\cS'_2$. On the other hand, if $R>1$ vectors in $\cS_2-\cS'_2$ are at distance $\leq R-1$ from $c$, so not in $C_R$. Recall from Lemma \ref{lemma:muequalitycond} that if $d$ is even, only vectors in $C_R$ can be over-covered, and thus we would have a contradiction. Since each over-covered vector in $\cS'_2$ will appear in two $B_1(e_i)$, we have:
\[
2 g(\cS_2')=2 g(\cS_2)=\sum_{i\in J}g(B_1(e_i))=b(n-R+1).
\]
    
The over-covering $g(B_1(e_i))$, $i\in J$, must therefore come from codewords in $\cS_{R+2}$. We claim that any codeword in $\cS_{R+2}$ is actually in $\cS'_{R+2}$: suppose otherwise that there existed $c'\in\cS_{R+2}$, with $i'\in \supp (c')\cap \supp(c)$. Then, the vector $z\in\cS_2$ with support $\set{i',i''}$, $i''\in \supp (c')$, $i'\neq i''$, would satisfy $d(z,c')=R$ and $d(z,c)\leq R-1$, which would mean $z$ is over-covered and not in $C_R$, a contradiction. Also, notice that a vector in $\cS_2'$  needs to be covered by a codeword in $\cS_{R+2}$ (because $c$ does not cover it).

All this entails that there are $\binom{n-R+1}{2}$ vectors in $\cS_2'$ which need to be covered by codewords in $\cS_{R+2}'$. Notice that $z\in\cS_2'$ being covered by $c'\in\cS_{R+2}'$ amounts to $\supp (z)\subseteq \supp (c')$. The supports of codewords in $\cS'_{R+2}$ form a $(2,R+2,n-R+1)$ covering design. By the well-known Sch{\"o}nheim bound~\cite[Theorem I]{Sch64}, the number of these codewords must be lower bounded by
\begin{align*}
\abs*{\cS'_{R+2}\cap C} &\geq \ceil*{\frac{n-R+1}{R+2}\ceil*{\frac{n-R}{R+1}}} = \ceil*{\frac{(n-R+1)(a-1)}{R+2}} \\
& \geq \frac{(n-R+1)(a-1)+1}{R+2},
\end{align*}
where the last inequality follows from our requirement that $(n-R+1)(a-1)\not\equiv 0 \pmod{R+2}$ when $d$ is even, and in the case $R=1$, it is known that $n=2^m$ for some $m\geq 1$  \cite{BorEtzRot25}, so $a-1= \lceil (2^m+1)/2 \rceil -1=2^{m-1}$ and thus $3=R+2\nmid 2^{2m-1}=(n-R+1)(a-1)$.

We now have,
    \begin{multline*}
        \frac{(n-R+1)(a-1)+1}{R+2}\binom{R+2}{2}\leq \abs*{\cS_{R+2}'\cap C}\binom{R+2}{2}=f(\cS_2')\\
        =\abs*{\cS_2'}+g(\cS_2') = \binom{n-R+1}{2}+\frac{b(n-R+1)}{2}. 
    \end{multline*}
Comparing the two endpoints of this chain, and after some algebraic manipulation, we arrive at $R\leq -1$, a contradiction.
\end{proof}

\begin{remark}\label{remark:regularityEven}
If $d$ is odd and $R>1$ then Theorem \ref{thm:lowerboundR+1} is false in general: for the code $C'$ from Example~\ref{ex:repeat} we have $a=2$ and $b=1$, so $(n-R+1)(a-1)=m+1\not\equiv 0\pmod {m+2}$. However, the vector $x=0^{m+1}1^{m-1}$ is at distance $m-1=R-1$ from the code, and $B_{R+1}(x)\cap C = \set{0^{2m}}$ whenever $m>1$.
\end{remark}

The following Corollary generalizes \cite[Theorem 6]{BorEtzRot25}:

\begin{corollary}\label{coro:alpha}
    Let $C$ be a non-perfect $(n,M,d)R$ nearly-perfect covering code satisfying the same condition as in Theorem~\ref{thm:lowerboundR+1}. Define
    \[
    \alpha\eqdef \frac{1}{\abs{C_{R-1}}}\sum_{x\in C_{R-1}}\abs*{B_{R+1}(x)\cap C}.
    \]
    Then 
    \[2\leq \alpha \leq 3-\frac{2}{R+1}.\]
    In particular, if $R=1$, then $\alpha=2$ and $\abs{B_{R+1}(x)\cap C}=2$ for every $x\in C_{R-1}$. Additionally, if $R>1$ then $\alpha=2$ under the assumption that $\abs{B_{R+1}(x)\cap C}=\alpha$ is a constant that does not depend on $x\in C_{R-1}$.
\end{corollary}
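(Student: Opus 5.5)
The plan is to get the lower bound directly from Theorem~\ref{thm:lowerboundR+1} and the upper bound by double counting, in the spirit of~\eqref{eq:groupalgebra2}. Under the hypotheses of Theorem~\ref{thm:lowerboundR+1}, every $x\in C_{R-1}$ satisfies $\abs{B_{R+1}(x)\cap C}\geq 2$. Averaging over $C_{R-1}$ immediately gives $\alpha\geq 2$.

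For the upper bound I will count the pairs $(v,c)$ with $c\in C$ and $d(v,c)\leq R+1$ in two ways. Counting by codewords gives $M\sum_{i=0}^{R+1}\binom{n}{i}$. Counting by vectors, I split $\F_2^n$ into three parts.
\begin{itemize}
\item A vector at distance at most $R-2$ from $C$ is counted exactly once. When $d$ is even we have $d=2R$ by Lemma~\ref{lemma:quasiperfect}, so any other codeword is at distance at least $R+2$. When $R=1$ this class is empty.
\item Each vector of $C_R$ is counted exactly $a$ times, by Lemma~\ref{lemma:epsilonequality}.
\item The vectors of $C_{R-1}$ contribute $\alpha\abs{C_{R-1}}$ in total, by the definition of $\alpha$.
\end{itemize}
This yields
\[
M\sum_{i=0}^{R+1}\binom{n}{i}=2^n+(a-1)\abs{C_R}+(\alpha-1)\abs{C_{R-1}}.
\]
Quasi-perfectness (Lemma~\ref{lemma:quasiperfect}) makes the balls of radius $R-1$ around codewords disjoint. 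Hence $\abs{C_{R-1}}=M\binom{n}{R-1}$ and $\abs{C_R}=2^n-M\sum_{i=0}^{R-1}\binom{n}{i}$. Substituting $2^n$ from~\eqref{eq:newdefevaluatedd} gives
\[
\abs{C_R}=M\binom{n}{R}\frac{n+1-2R}{n+1+b-2R}.
\]

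Next I divide by $M\binom{n}{R-1}$ and use the ratios $\binom{n}{R}/\binom{n}{R-1}=(n-R+1)/R$ and $\binom{n}{R+1}/\binom{n}{R-1}=(n-R+1)(n-R)/(R(R+1))$. I also use the identity $(R+1)(a-1)=n-R+b$, which follows from the definition of $b$. This gives an explicit expression for $\alpha$ as a rational function of $n$, $R$ and $b$:
\[
\alpha=1+\frac{n-R+1}{R}\left[\frac{n-R}{R+1}+\frac{b}{n+1+b-2R}-\frac{(n-R+b)(n+1-2R)}{(R+1)(n+1+b-2R)}\right].
\]
It remains to simplify the bracket and show that $\alpha\leq 3-\frac{2}{R+1}$, using $1\leq b\leq R$ and $n\geq 2R-1$. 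I expect the terms quadratic in $n$ to cancel, leaving an expression whose maximum over $b$ is attained at an endpoint. This algebraic simplification is the main obstacle: if the expression does not collapse cleanly, I will treat the resulting inequality as a linear (or fractional-linear) function of $b$ and check the endpoints $b=1$ and $b=R$.

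Finally, I will derive the special cases from the two bounds. For $R=1$ the upper bound is $3-1=2$, so $\alpha=2$. Since every term in the average is at least $2$, every $x\in C_0$ then has $\abs{B_2(x)\cap C}=2$. For $R>1$, if $\abs{B_{R+1}(x)\cap C}$ takes the same value for every $x\in C_{R-1}$, then $\alpha$ is an integer with $2\leq\alpha<3$, which forces $\alpha=2$.
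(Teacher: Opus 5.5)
Your proposal is correct and is essentially the paper's proof: the lower bound comes from averaging Theorem~\ref{thm:lowerboundR+1}, and the upper bound from double counting $\sum_{x}\abs{B_{R+1}(x)\cap C}$ over $C_R$, $C_{R-1}$ and the remaining vectors, substituting~\eqref{eq:newdefevaluatedd}, and using $b\leq R$. The algebra you flagged does collapse: the bracket equals $\frac{2bR}{(R+1)(n+1+b-2R)}$, so $\alpha=1+\frac{2b(n-R+1)}{(R+1)(n+1+b-2R)}$, which is increasing in $b$ (since $n+1-2R\geq 0$) and equals $3-\frac{2}{R+1}$ at $b=R$; this is the paper's relation $b\,(2(n-R+1)-(\alpha-1)(R+1))=(R+1)(\alpha-1)(n+1-2R)$ solved for $\alpha$.
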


\begin{proof}
We observe that $\alpha$ is the average of $\abs{B_{R+1}(x)\cap C}$ as $x$ ranges over $C_{R-1}$. Thus, by Theorem~\ref{thm:lowerboundR+1}, $\alpha\geq 2$. Denote the volume of a ball of radius $r$ by $V_r\eqdef \abs{B_r(0)}=\sum_{i=0}^r\binom{n}{i}$. Then,
\begin{align}
        M V_{R+1}&=\sum_{x\in\F_2^n}\abs*{B_{R+1}(x)\cap C}=\parenv*{\sum_{x\in C_R}+\sum_{x\in C_{R-1}}+\sum_{x :d(x,C)\leq R-2}}\abs*{B_{R+1}(x)\cap C}\label{eq:coro1}\\
        &= a\abs*{C_R}+\alpha \abs*{C_{R-1}}+M V_{R-2}\label{eq:coro2}\\
        &=a(2^n-M V_{R-1})+\alpha M \binom{n}{R-1}+M V_{R-2}\label{eq:coro3}\\
        &=a\sparenv*{M\parenv*{V_R-\binom{n}{R}\frac{b}{n+1+b-2R}}-M V_{R-1}}+\alpha M \binom{n}{R-1}+M V_{R-2}\label{eq:coro4},
\end{align}
where the third sum in \eqref{eq:coro1} is $0$ if $R=1$, \eqref{eq:coro2} follows from Lemma \ref{lemma:epsilonequality}, the definition of $\alpha$, and Lemma \ref{lemma:muequalitycond} when $d$ is even, \eqref{eq:coro3} follows from Lemma \ref{lemma:quasiperfect}, and \eqref{eq:coro4} from Corollary \ref{coro:defwith2Rreplaced}.

Dividing by $M$, we obtain,
    \[
    V_{R+1}-V_{R-2}-\alpha \binom{n}{R-1}-a(V_R-V_{R-1})+\frac{ab}{n+1+b-2R}\binom{n}{R}=0.
    \]
Using the expressions for the volume, this gives,
    \[
    \binom{n}{R+1}+\binom{n}{R}+\binom{n}{R-1}-\alpha \binom{n}{R-1}-a\binom{n}{R}+\frac{ab}{n+1+b-2R}\binom{n}{R}=0.
    \]
After dividing by $\binom{n}{R}$,
    \[
    \frac{n-R}{R+1}+1-(\alpha-1)\frac{R}{n-R+1}-a+\frac{ab}{n+1+b-2R}=0.
    \]
    Multiplying by $R+1$, and recalling $a(R+1)=n+1+b$, we obtain,
    \[
    \frac{b(n+1+b)}{n+1+b-2R}-b-(\alpha-1)\frac{R(R+1)}{n-R+1}=0.
    \]
%    Multiplying by $n-R+1$, and rearranging gives
%    \[
%    \frac{b(n+1+b)(n-R+1)}{n+1+b-2R}=b(n-R+1)+(\alpha-1)R(R+1)
%    \]
%    \[
%    b(n-R+1)((n+1+b)-(n+1+b-2R))=(\alpha-1)(n+1+b-2R)R(R+1)
%    \]
%    \[
%    2b(n-R+1)=(\alpha-1)(R+1)(n+1+b-2R)
%    \]
%    \[
%    b(2(n-R+1)-(\alpha-1)(R+1))=(\alpha-1)(R+1)(n+1-2R),
%    \]
%    from which we deduce
After rearranging we get,
    \[
    b\cdot (2(n-R+1)-(\alpha-1)(R+1))= (R+1) (\alpha-1)(n+1-2R).
    \]
If $R=1$ or $d$ is even, the right hand side is strictly positive, and thus so is $2(n-R+1)-(\alpha-1)(R+1)$. One can verify now that if $\alpha >3-\frac{2}{R+1}$, then
\[
\frac{b}{R+1}=\frac{(\alpha-1)(n+1-2R)}{2(n-R+1)-(\alpha-1)(R+1)} > \frac{R}{R+1}.
\]
But then $b> R$, which is a contradiction. We therefore proved that $2\leq \alpha \leq 3-\frac{2}{R+1}$.

For the last claim, if $\abs{B_{R+1}(x)\cap C}=\alpha$ is constant and does not depend on $x$, then $\alpha$ must be an integer. The only possible one is $\alpha=2$.
\end{proof}

Unfortunately, Corollary \ref{coro:alpha} is not strong enough to derive a useful group-algebra equality for nearly-perfect covering codes in the case $R>1$. It is quite striking that obtaining such an equality seems to be much more challenging than for perfect and nearly-perfect error-correcting codes, and we do not know if it is indeed possible to derive it. We leave this as an open path for future research.

Finally, we note that finding a group-algebra identity is not only useful to proving the nonexistence of certain codes, but also to studying weight distributions. This technique is employed in~\cite{GoeSno72} for nearly-perfect error-correcting codes, and in~\cite{BorEtzRot25} for nearly-perfect covering codes of radius $R=1$.

\section{Conclusion}
\label{sec:conc}

In this paper we studied nearly-perfect covering codes. We started by adjusting the Van Wee bound to account for the minimum distance of the code, thus opening the door to defining nearly-perfect covering codes with covering radius $R\geq 2$.

We completely classified all nearly-perfect covering codes for $R=2$, showing they must be equivalent to the codes in Example~\ref{ex:repeat} and Example~\ref{ex:len6}. We then extended the treatment to $R=3$, and showed how it may be analyzed for any fixed $R$ greater than this. As a consequence, we proved that for any given $R\geq 3$, there are at most a finite number of nearly-perfect covering codes.

Several open questions remain. First, we do not, at the moment, know how to completely classify all nearly-perfect covering codes. While we do have machinery in place, given in Section~\ref{sec:R3ormore}, it requires the algorithmic solution of Thue-Mahler equations on a case-by-case basis. The group-algebra approach, which was successful in the case of perfect and nearly-perfect error-correcting code classification, seems to be difficult as we lack the required regularity of coverage.

We curiously note that the codes in Example~\ref{ex:repeat} and Example~\ref{ex:len6}, are all linear codes. Whether all nearly-perfect covering codes have to be equivalent to linear codes, is an interesting question.

Finally, apart from extension to higher radii, there is the obvious extension to larger alphabets. The parameters of perfect codes have been completely determined for larger alphabets~\cite{Tie73}. Even those of nearly-perfect error-correcting codes have been completely analyzed in~\cite{Lin77}. Larger alphabets, however, complicate the analysis following the Van Wee bound of Theorem~\ref{th:vanwee}. For example, if both the alphabet size $q$ and $n$ are even, then the Van Wee bound with $R=1$ is \cite[Corollary 7]{Van91}:
\[
M\geq \frac{q^n}{n(q-1)},
\]
which is not an integer for $q>2$. Thus, we can no longer assume exact equality through the chain of inequalities in the proof of the bound, as we did in the binary case. We expect finding a solution to these open questions to be quite challenging.

\backmatter

\bmhead{Acknowledgments}

This research was supported in part by NSERC under grant no.~RGPIN-2026-06805, and grant no.~CGRS D-611819-2026.
    
\bibliography{allbib}

\end{document}